\theoremstyle{definition}
\newtheorem{theorem}{Theorem}
\newtheorem{lemma}{Lemma}
\newtheorem{corollary}{Corollary}
\newtheorem{remark}{Remark}
\title{
  
  Computing longest palindromic substring after single-character or block-wise edits\footnote{%
    Preliminary versions of this work appeared in~\cite{Funakoshi_CPM2018,Funakoshi_CPM2019}.}}
\author[1]{Mitsuru Funakoshi}
\author[1]{Yuto Nakashima}
\author[1,2]{Shunsuke Inenaga}
\author[3]{Hideo~Bannai}
\author[1]{Masayuki Takeda}
\affil[1]{Department of Informatics, Kyushu University

\{mitsuru.funakoshi, yuto.nakashima, inenaga, takeda\}@inf.kyushu-u.ac.jp}
\affil[2]{PRESTO, Japan Science and Technology Agency}
\affil[3]{M\&D Data Science Center, Tokyo Medical and Dental University
hdbn.dsc@tmd.ac.jp}
\date{}
\newcommand{\PalPre}{\mathit{PrePals}}
\newcommand{\PalSuf}{\mathit{SufPals}}
\newcommand{\MaxPalE}{\mathit{MaxPalEnd}}
\newcommand{\MaxPalB}{\mathit{MaxPalBeg}}
\newcommand{\OutLCE}{\mathsf{OutLCE}}
\newcommand{\LeftLCE}{\mathsf{LeftLCE}}
\newcommand{\RightLCE}{\mathsf{RightLCE}}
\newcommand{\RLFbeg}{\mathit{RLFBeg}}
\newcommand{\RLFend}{\mathit{RLFEnd}}
\newcommand{\lcp}{\mathit{lcp}}
\newcommand{\Ext}{\mathit{Ext}}
\newcommand{\rev}[1]{{#1}^{\mathit{R}}}
\newcommand{\argmax}{\mathop{\rm arg~max}\limits}
\newcommand{\figcaption}[1]{\def\@captype{figure}\caption{#1}}
\newcommand{\tblcaption}[1]{\def\@captype{table}\caption{#1}}
\newtheorem{observation}{Observation}
\begin{document}

\maketitle

\begin{abstract}
  \emph{Palindromes} are important objects in strings which have been
  extensively studied from combinatorial, algorithmic, and bioinformatics points of views.
  It is known that the length of the longest palindromic substrings (LPSs)
  of a given string $T$ of length $n$ can be computed
  in $O(n)$ time by Manacher's algorithm [J. ACM '75].
  In this paper, we consider the problem of finding
  the LPS after the string is edited.
  We present an algorithm that uses $O(n)$ time and space
  for preprocessing, and answers the length of the LPSs in
  $O(\log (\min \{\sigma, \log n\}))$ time after a single
  character substitution, insertion, or deletion,
  where $\sigma$ denotes the number of distinct characters appearing in $T$.
  We also propose an algorithm that uses $O(n)$ time and space
  for preprocessing, and answers the length of the LPSs
  in $O(\ell + \log \log n)$ time,
  after an existing substring in $T$ is replaced by a
  string of arbitrary length $\ell$.
\end{abstract}

\setcounter{footnote}{0}

\section{Introduction}

\emph{Palindromes} are strings that read the same forward and backward.
The problems of finding palindromes or palindrome-like structures
in a given string are fundamental tasks in string processing,
and thus have been extensively studied (e.g., see~\cite{Apostolico1995parallel,matsubara_tcs2009,DBLP:journals/ipl/GroultPR10,DBLP:conf/stringology/KosolobovRS13,Porto2002ApproxPalindrome,KolpakovK09,NarisadaDNIS17,GawrychowskiIIK18,DBLP:journals/ijfcs/AdamczykACR18} and references therein).

One of the earliest problems regarding palindromes is
the \emph{longest palindromic substring} (\emph{LPS}) problem,
which asks to find (the length) of the longest palindromes
that appear in a given string.
This problem dates back to 1970's~\cite{Manacher75},
and since then it has been popular as a good algorithmic exercise.
Observe that the longest palindromic substring is also
a maximal (non-extensible) palindrome in the string,
whose center is an integer position if its length is odd,
or a half-integer position if its length is even.
Since one can compute the maximal palindromes
for all such centers in $O(n^2)$ total time by na\"ive character comparisons,
the LPS problem can also be easily solved in $O(n^2)$ time.

Manacher~\cite{Manacher75} gave
an elegant $O(n)$-time solution to the LPS problem.
Manacher's algorithm uses symmetry of palindromes
and character
equality comparisons only,
and therefore works in $O(n)$ time for any alphabet.
It was pointed out in~\cite{Apostolico1995parallel}
that Manacher's algorithm actually computes all the maximal palindromes
in the string.
In the case where the input string is drawn from a constant size alphabet or
an integer alphabet of size polynomial in $n$,
there is an alternative suffix tree~\cite{Weiner73} based algorithm
which takes $O(n)$ time~\cite{gusfield97:_algor_strin_trees_sequen}.
This suffix-tree based algorithm also computes all maximal palindromes.
For a constant size alphabet or an integer alphabet, by using eertree~\cite{eertree}, which is a $O(n)$ space data structure representing all palindromes of a string, one can get LPSs in $O(n)$ time.
This data structure can be constructed in $O(n)$ time offline.
The LPS problem in the streaming model
has also been considered in the literature~\cite{BerenbrinkEMA14,GawrychowskiMSU16}.

Getting back to the problem of computing all maximal palindromes,
there is a simple $O(n)$-space data structure
representing all of the computed maximal palindromes;
simply store their lengths in an array of length $2n-1$
together with the input string $T$.
However, this data structure is apparently not flexible for string edits,
since even a single character substitution, insertion, or deletion
can significantly break palindromic structures of the string.
Indeed, $\Omega(n^2)$ palindromic substrings and
$\Omega(n)$ maximal palindromes can be affected by a single edit operation
(E.g., consider to replace the middle character of string $a^n$
with another character $b$).
Hence, an intriguing question is whether there exists
a space-efficient data structure for the input string $T$
which can quickly answer the following query:
What is the length of the longest palindromic substring(s),
if single character substitution, insertion, or deletion is performed?
We call this a \emph{1-ELPS query}.

We present an algorithm which uses $O(n)$ time and space
for preprocessing,
and $O(\log (\min \{\sigma, \log n\}))$ time for 1-ELPS queries,
where $\sigma$ is the number of distinct characters appearing in $T$.
Thus, our query algorithm runs in optimal $O(1)$ time for any constant-size alphabet,
and runs in $O(\log \log n)$ time for
larger alphabets of size $\sigma = \Omega(\log n)$.
In addition, this algorithm can readily be extended to
a randomized version with hashing,
which answers queries in $O(1)$ time each
and uses $O(n)$ expected time and $O(n)$ space for preprocessing.

We also consider a more general variant of 1-ELPS queries
which allows for a \emph{block-wise} edit operation,
where an existing substring in the input string $T$
can be replaced with a string of arbitrary length $\ell$,
called an \emph{$\ell$-ELPS queries}.
We present an algorithm which uses $O(n)$ time and space
for preprocessing and $O(\ell + \log \log n)$ time for $\ell$-ELPS queries.
We emphasize that this query time is independent of the length of the
original block (substring) to be edited.

All the results reported in this paper are valid for any string of length $n$
over an integer alphabet of size polynomial in $n$.

\subsection*{Related Work}
Amir et al.~\cite{AmirCIPR17} proposed an algorithm
to find the \emph{longest common factor} (\emph{LCF}) of two strings,
after a single character edit operation is performed in one of the strings.
Their data structure occupies $O(n \log^3 n)$ space and uses $O(\log^3 n)$ query time,
where $n$ is the length of the input strings.
Their data structure can be constructed in $O(n \log^4 n)$ expected time.
After that, Abedin et al.~\cite{AbedinH0T18} showed that the above problem can be reduced to the heaviest induced ancestors problem over two trees and solved in $O(\log n \log \log n)$ (resp. $O(\log^2 n \log \log n)$) query time using an $O(n \log n)$ (resp. $O(n)$) space data structure.
Urabe et al.~\cite{UrabeNIBT18} considered the problem of computing
the longest \emph{Lyndon word} in a string after an edit operation.
They showed algorithms for $O(\log n)$-time queries for a single character edit operation
and $O(\ell \log \sigma + \log n)$-time queries for a block-wise edit operation,
both using $O(n)$ time and space for preprocessing.
We note that in these results including ours in this current paper,
edit operations are given as \emph{queries} and thus the input string(s)
remain static even after each query.
This is due to the fact that changing the data structure dynamically can be too costly in many cases.

It is noteworthy, however, that
recently Amir et al.~\cite{DBLP:conf/esa/AmirCPR19}
solved dynamic versions for the LCF problem and some of its variants.
In particular, when $n$ is the maximum length of the string that can be edited,
they showed a data structure of $O(n \log n)$ space
that can be dynamically maintained and can answer $1$-ELPS queries
in $O(\sqrt{n}\log^{2.5}n)$ time,
after $O(n \log^2 n)$ time preprocessing.
Furthermore, Amir et al.~\cite{DBLP:journals/corr/abs-1906-09732} presented
an algorithm for computing the longest palindrome in a dynamic string in $O(\mathrm{polylog} \:n)$ time per single character substitution.
In comparison to these recent results on dynamic strings,
although our algorithm does not allow for changing the string,
our algorithm answers 1-ELPS queries in $O(\log \log n)$ time
or even faster for small alphabet size $\sigma$,
which is exponentially faster than $O(\mathrm{polylog} \:n)$.

In addition, Amir et al.~\cite{AmirBCK19} presented fully dynamic algorithm for maintaining a representation of the squares and Charalampopoulos et al.~\cite{Charalampopoulos20} showed polylogarithmic time algorithm for the LCF problem of two dynamic strings.

\section{Preliminaries}\label{sec:preliminaries}

\subsection{String Notations}

Let $\Sigma$ be the {\em alphabet}.
An element of $\Sigma^*$ is called a {\em string}.
The length of a string $T$ is denoted by $|T|$.
The empty string $\varepsilon$ is a string of length 0,
namely, $|\varepsilon| = 0$.
For a string $T = xyz$, $x$, $y$ and $z$ are called
a \emph{prefix}, \emph{substring}, and \emph{suffix} of $T$, respectively.
For two strings $X$ and $Y$,
let $\lcp(X, Y)$ denote the length of the longest
common prefix of $X$ and $Y$.
A prefix $x$ and a suffix $z$ of $T$ are respectively
called a \emph{proper prefix} and \emph{proper suffix} of $T$,
if $x \neq T$ and $z \neq T$.

For a string $T$ and an integer $1 \leq i \leq |T|$,
$T[i]$ denotes the $i$-th character of $T$,
and for two integers $1 \leq i \leq j \leq |T|$,
$T[i..j]$ denotes the substring of $T$
that begins at position $i$ and ends at position $j$.
For convenience, let $T[i..j] = \varepsilon$ when $i > j$.
An integer $p \geq 1$ is said to be a \emph{period}
of a string $T$ iff $T[i] = T[i+p]$ for all $1 \leq i \leq |T|-p$.
If a string $B$ is both a proper prefix and a proper suffix of another string $T$,
then $B$ is called a \emph{border} of $T$.

The \emph{run length} (\emph{RL}) factorization
of a string $T$ is
a sequence $f_1, \ldots, f_m$ of maximal runs of the same characters
such that $T = f_1 \cdots f_m$
(namely, each RL factor $f_j$ is a repetition of the same character $a_j$ with $a_j \neq a_{j+1}$).
For each position $1 \leq i \leq n$ in $T$,
let $\RLFbeg(i)$ and $\RLFend(i)$ denote the beginning and ending positions
of the RL factor that contains the position $i$, respectively.
One can easily compute in $O(n)$ time the RL factorization of
string $T$ of length $n$
together with $\RLFbeg(i)$ and $\RLFend(i)$ for all positions $1 \leq i \leq n$.

Let $\rev{T}$ denote the reversed string of $T$,
i.e., $\rev{T} = T[|T|] \cdots T[1]$.
A string $T$ is called a \emph{palindrome} if $T = \rev{T}$.
We remark that the empty string $\varepsilon$ is also
considered to be a palindrome.
A non-empty palindromic substring $T[i..j]$
is said to be a \emph{maximal palindrome} of $T$
if $T[i-1] \neq T[j+1]$, $i = 1$, or $j = |T|$.
For any non-empty palindromic substring $T[i..j]$ in $T$,
$\frac{i+j}{2}$ is called its \emph{center}.
It is clear that for each center $c = 1, 1.5, \ldots, n-0.5, n$,
we can identify the maximal palindrome $T[i..j]$ whose center is $c$
(namely, $c = \frac{i+j}{2}$).
Thus, there are exactly $2n-1$ maximal palindromes in a string of length $n$
(including empty ones which occur at non-integer center $c$ when $T[c-1/2] \neq T[c+1/2]$).
In particular, maximal palindromes $T[1..i]$ and $T[i..|T|]$ for $1 \leq i \leq n$
are respectively called a \emph{prefix palindrome} and a \emph{suffix palindrome} of $T$.

A \emph{rightward  longest common extension} (\emph{rightward  LCE})
query on a string $T$
is to compute $\lcp(T[i..|T|], T[j..|T|])$
for given two positions $1 \leq i \neq j \leq |T|$.
Similarly, a \emph{leftward LCE} query is
to compute $\lcp(\rev{T[1..i]}, \rev{T[1..j]})$.
We denote by $\RightLCE_T(i, j)$ and $\LeftLCE_T(i, j)$
rightward and leftward LCE queries for positions $1 \leq i \neq j \leq |T|$,
respectively.
An \emph{outward LCE} query is, given two positions
$1 \leq i < j \leq |T|$,
to compute $\lcp(\rev{(T[1..i])}, T[j..|T|])$.
We denote by $\OutLCE_{T}(i, j)$ an outward LCE query for positions $i < j$
in the string $T$.

\subsection{Computing Maximal Palindromes}

Manacher~\cite{Manacher75} showed an elegant online algorithm
which computes all maximal palindromes of a given string $T$ of length $n$
in $O(n)$ time.
An alternative offline approach is to use outward LCE queries
for $2n-1$ pairs of positions in $T$.
Using the suffix tree~\cite{Weiner73} for string $T\$\rev{T}\#$
enhanced with a lowest common ancestor data structure~\cite{HarelT84,SchieberV88,DBLP:conf/latin/BenderF00},
where $\$$ and $\#$ are special characters which do not appear in $T$,
each outward LCE query can be answered in $O(1)$ time.
For any integer alphabet of size polynomial in $n$,
preprocessing for this approach takes $O(n)$ time and space~\cite{Farach-ColtonFM00,gusfield97:_algor_strin_trees_sequen}.
Let $\mathcal{M}$ be an array of length $2n-1$
storing the lengths of maximal palindromes
in increasing order of centers.
For convenience, we allow the index for $\mathcal{M}$
to be an integer or a half-integer from $1$ to $n$,
so that $\mathcal{M}[i]$ stores the length of the maximal palindrome of
$T$ centered at $i$.

A palindromic substring $P$ of a string $T$
is called a \emph{longest palindromic substring} (\emph{LPS})
if there are no palindromic substrings of $T$ which are longer than $P$.
Since any LPS of $T$ is always a maximal palindrome of $T$,
we can find all LPSs and their lengths in $O(n)$ time.

\subsection{Our Problems}

In this paper, we consider the three standard edit operations,
i.e., insertion, deletion, and substitution of a character in the input string $T$ of length $n$.
Let $T'$ denote the string after one of the above edit operations was performed at a given position.
A \emph{1-edit longest palindromic substring} query
(\emph{1-ELPS} query) is to answer (the length of) a longest palindromic substring of $T'$.
In Section~\ref{sec:1-ELPS}, we will present an $O(n)$-time and space
preprocessing scheme such that
subsequent \emph{1-ELPS} queries can be answered
in $O(\log (\min \{\sigma, \log n\}))$ time.

For any integer $\ell \geq 0$,
an \emph{$\ell$-block edit longest palindromic substring} query
(\emph{$\ell$-ELPS} query), which is a generalization of
the 1-ELPS query, asks (the length of)
a longest palindromic substring of $T''$,
where $T''$ denotes the string after a substring of $T$ (of any length) is replaced by a string of length $\ell$.
In Section~\ref{sec:block-ELPS},
we will propose an $O(n)$-time and space preprocessing scheme
such that subsequent $\ell$-ELPS queries can be answered in $O(\ell + \log \log n)$ time.
We remark that in both problems
string edits are only given as \emph{queries}, i.e.,
we do not explicitly rewrite the original string $T$ into $T'$ nor $T''$
and $T$ remains unchanged for further queries.
We also remark that in our problem the length $\ell$ of a substring $X$
that substitutes a given interval (substring) can be arbitrary.

\subsection{Properties of Maximal Palindromes}

The following properties of palindromes are useful in our algorithms.

\begin{lemma} \label{lem:pal_border}
Any border $B$ of a palindrome $P$ is also a palindrome.
\end{lemma}

\begin{proof}
  Since $P$ is a palindrome, clearly $P[1..m] = \rev{(P[|P|-m+1..|P|])}$
  for any $1 \leq m \leq |P|$.
  Since $B$ is a border of $P$,
  $B = P[1..|B|] = \rev{(P[|P|-|B|+1..|P|])} = \rev{B}$.
\end{proof}

Let $T$ be a string of length $n$.
For each $1 \leq i \leq n$, let $\MaxPalE_T(i)$ denote the set of
maximal palindromes of $T$ that end at position $i$.
Let $\mathbf{S}_i = s_1, \ldots, s_{g}$ be the sequence of
lengths of maximal palindromes in $\MaxPalE_T(i)$ sorted in increasing order,
where $g = |\MaxPalE_T(i)|$.
Let $d_j$ be the progression difference for $s_j$,
i.e., $d_j = s_{j} - s_{j-1}$ for $2 \leq j \leq g$.
For convenience, let $d_1 = 0$.
We use the following lemma which is based on
periodic properties of maximal palindromes
ending at the same position.

\begin{lemma}
  \label{lem:maximal_palindromes}
  \hfill
  \begin{enumerate}
    \item[(i)] For any $1 \leq j < g$, $d_{j+1} \geq d_{j}$.
    \item[(ii)] For any $1 < j < g$, if $d_{j+1} \neq d_{j}$, then $d_{j+1} \geq d_j + d_{j-1}$.
    \item[(iii)] $\mathbf{S}_i$ can be represented by $O(\log i)$ arithmetic progressions,
      where each arithmetic progression is a tuple $\langle s, d, t \rangle$ representing the sequence $s, s+d, \ldots, s + (t-1)d$ with common difference~$d$.
    \item[(iv)] If $t \geq 2$, then the common difference $d$ is a period of every maximal palindrome
      which ends at position $i$ in $T$ and whose length belongs to the arithmetic progression $\langle s, d, t \rangle$.
  \end{enumerate}
\end{lemma}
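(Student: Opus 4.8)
The plan is to reduce all four parts to the interplay between borders and periods of the nested palindromes ending at position~$i$. First I would record the basic structural fact: since every element of $\MaxPalE_T(i)$ ends at the same position, the sorted palindromes $P_1,\ldots,P_g$ (with $|P_j| = s_j$) are nested as suffixes, so each $P_{j-1}$ is simultaneously a proper suffix of $P_j$ and, because $P_j$ is a palindrome, a proper prefix of $P_j$; that is, $P_{j-1}$ is a palindromic border of $P_j$. Consequently $d_j = s_j - s_{j-1}$ is a period of $P_j$, and more generally $s_{j'} - s_j$ is a period of $P_{j'}$ for every $j < j'$. This border-to-period dictionary, together with Lemma~\ref{lem:pal_border} (which lets me convert any newly discovered border back into a palindromic suffix ending at~$i$), is the single device driving the whole argument. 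Parts (iii) and (iv) will then be bookkeeping consequences of (i) and (ii), so the real work is the two inequalities.

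For (i) and (ii) I would fix three consecutive palindromes $P_{j-1},P_j,P_{j+1}$ and set $a = d_j$, $b = d_{j+1}$. The decisive quantity is whether the two copies of $P_j$ inside $P_{j+1}$ (one as its length-$s_j$ prefix, one as its length-$s_j$ suffix) overlap; since $P_{j+1}$ has period $b$, they are shifted by exactly $b$, hence disjoint precisely when $b \geq s_j$. In that non-overlapping case both inequalities are immediate, because $b \geq s_j = a + s_{j-1} \geq a + d_{j-1} \geq a$. The whole difficulty is therefore concentrated in the overlapping case $b < s_j$. There the overlap forces all of $P_j$ to acquire period $b$ in addition to its period $a$; for (i), if we had $b < a$ this would give a border of $P_j$ of length $s_j - b > s_{j-1}$, which is a palindrome by Lemma~\ref{lem:pal_border} and hence a palindromic suffix ending at $i$ of length strictly between $s_{j-1}$ and $s_j$, contradicting the consecutiveness of $P_{j-1}$ and $P_j$ (modulo the maximality check discussed below), so $b \geq a$. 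For (ii) I would push the same occurrence bookkeeping one step deeper, tracking the period $d_{j-1}$ of $P_{j-1}$ through the periodic structure of $P_{j+1}$ to show that when $b$ strictly exceeds $a$ the two copies must in fact separate by at least $a + d_{j-1}$.

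I expect this periodicity analysis to be the main obstacle, for two reasons. First, a naive appeal to Fine and Wilf on the two periods $a$ and $b$ of $P_j$ does not suffice: the synchronization condition $a + b - \gcd(a,b) \leq s_j$ can fail precisely when the palindromes are long relative to their periods, so genuinely palindrome-specific reasoning (using that the relevant borders are themselves palindromes) is required, with a careful split into the overlapping and non-overlapping cases. Second, I must be careful that $\MaxPalE_T(i)$ is only a subset of all palindromic suffixes of $T[1..i]$, since a palindromic suffix need not be a maximal palindrome ending at $i$ if it extends to the right; so when the overlap argument produces an intermediate palindromic suffix I have to ensure it yields an actual maximal palindrome in $\MaxPalE_T(i)$, or else route the contradiction through the full set of palindromic suffixes and then restrict. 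This is exactly the delicate point of the argument.

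Granting (i) and (ii), part (iii) follows by counting. The differences $d_1 \leq d_2 \leq \cdots \leq d_g$ are non-decreasing, so a maximal block of equal consecutive differences is one arithmetic progression; at each transition to a strictly larger difference, (ii) gives $d_{j+1} \geq d_j + d_{j-1}$, so the sequence of distinct difference values obeys a Fibonacci-type lower bound and therefore grows at least geometrically. Since the largest difference is at most $s_g \leq i$, there can be only $O(\log i)$ distinct values, hence $O(\log i)$ progressions, each encoded by a tuple $\langle s, d, t\rangle$. Finally, for (iv), inside a progression $\langle s, d, t\rangle$ with $t \geq 2$ every palindrome except the shortest has its immediate predecessor (difference exactly $d$) as a palindromic border, so $d$ is a period of it; and the shortest one, of length $s$, is a prefix of the next palindrome in the progression, of length $s + d$, which has period $d$, so the shortest one inherits the period $d$ as well. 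This settles all four parts.
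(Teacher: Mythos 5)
The paper never proves this lemma itself---it is imported wholesale from \cite{Funakoshi_CPM2018} (no proof environment follows the statement)---so there is no in-paper argument to compare against and your proposal must stand on its own. Its skeleton is the right one: nested palindromic suffixes give palindromic borders, borders give periods, an intermediate palindromic suffix should contradict consecutiveness, and then (iii) follows by Fibonacci-type growth of the distinct differences and (iv) by inheritance of the period $d$ within a progression. Your treatments of (iii) and (iv) are indeed correct \emph{granted} (i) and (ii).

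The genuine gap is in (i) and (ii), and it is exactly the point you flag and then leave open. When $d_{j+1} < s_j$, the period $d_{j+1}$ of $P_{j+1}$ yields a palindromic suffix $Q$ of $T[1..i]$ of length $s_j - d_{j+1}$, strictly between $s_{j-1}$ and $s_j$; but $\mathbf{S}_i$ lists only \emph{maximal} palindromes, so the existence of $Q$ does not yet contradict the consecutiveness of $P_{j-1},P_j$ in $\MaxPalE_T(i)$. Neither of your two proposed escapes closes this. ``Prove it for all palindromic suffixes and then restrict'' cannot work even in principle: properties (i)--(ii) are not hereditary under passing to subsets, since deleting elements merges adjacent gaps (e.g.\ lengths $\{1,2,3,4\}$ satisfy (i), but the subset $\{1,3,4\}$ has differences $2,1$), so knowing them for all palindromic suffixes says nothing about the subset $\MaxPalE_T(i)$. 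And ``ensure $Q$ yields a maximal palindrome'' is precisely the step that needs an argument, which you never give. The missing idea is an extendability-transfer: if $Q$ is not maximal, then it extends, i.e.\ $T[i-|Q|] = T[i+1]$; now the occurrence of $P_{j+1}$ covers both position $i-|Q| = i-s_j+d_{j+1}$ and position $i-s_j$ (the character immediately preceding $P_j$), and these positions differ by exactly $d_{j+1}$, a period of $P_{j+1}$, so $T[i-s_j] = T[i-|Q|] = T[i+1]$---contradicting that $P_j$ itself is maximal (the degenerate cases $i=n$ and $i-s_j=0$ are immediate). With this argument (i) closes, and (ii) closes the same way after noting that $d_j < d_{j+1} < d_j + d_{j-1}$ would place $|Q| = s_j - d_{j+1}$ strictly between $s_{j-2}$ and $s_{j-1}$. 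Without it, the two inequalities---the heart of the lemma, on which your (iii) and (iv) rest---are unproved; for (ii) in particular you offer only the phrase ``push the same occurrence bookkeeping one step deeper,'' which is a plan, not a proof.
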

Each arithmetic progression $\langle s, d, t \rangle$ is called
a \emph{group} of maximal palindromes.
Similar arguments hold for the set $\MaxPalB_T(i)$ of
maximal palindromes of $T$ that begin at position~$i$.

To prove Lemma~\ref{lem:maximal_palindromes},
we use arguments from the literature~\cite{Apostolico1995parallel,GasieniecSWAT96,matsubara_tcs2009}.
Let us for now consider any string $W$ of length $m$.
In what follows we will focus on suffix palindromes in $\PalSuf(W)$
and discuss their useful properties.
We remark that symmetric arguments hold for prefix palindromes in $\PalPre(W)$ as well.
Let $\mathbf{S}'= s'_1, \ldots, s'_{g'}$ be the sequence of
lengths of suffix palindromes of $\mathbf{S}'$ sorted in increasing order,
where $g' = |\PalSuf(W)|$.
Let $d'_{j}$ be the progression difference for $s'_{j}$,
i.e., $d'_{j} = s'_{j} - s'_{j-1}$ for $2 \leq j \leq g'$.
For convenience, let $d'_1 = 0$.
Then, the following results are known:

\begin{lemma}[\cite{Apostolico1995parallel,GasieniecSWAT96,matsubara_tcs2009}]
  \label{lem:palindromic_suffixes}
  \hfill
  \begin{enumerate}
    \item[(A)] For any $1 \leq j < g'$, $d'_{j+1} \geq d'_{j}$.
    \item[(B)] For any $1 < j < g'$, if $d'_{j+1} \neq d'_{j}$, then $d'_{j+1} \geq d'_{j} + d'_{j-1}$.
    \item[(C)] $\mathbf{S}'$ can be represented by $O(\log m)$ arithmetic progressions,
      where each arithmetic progression is a tuple $\langle s', d', t' \rangle$ representing the sequence $s', s'+d', \ldots, s' + (t'-1)d'$ of lengths of $t'$ suffix palindromes with common difference $d'$.
    \item[(D)] If $t' \geq 2$, then the common difference $d'$ is a period of every suffix palindrome of $W$ whose length belongs to the arithmetic progression $\langle s', d', t' \rangle$.
  \end{enumerate}
\end{lemma}

The set of suffix palindromes of $W$ whose lengths belong to
the same arithmetic progression $\langle s', d', t' \rangle$ is also called
a \emph{group} of suffix palindromes.
Clearly, every suffix palindrome in the same group has period $d'$,
and this periodicity will play a central role in our algorithms.

We are ready to prove Lemma~\ref{lem:maximal_palindromes}.

\begin{proof}
  It is clear that $\MaxPalE_T(i) \subseteq \PalSuf(T[1..i])$,
  namely,
  \[ \MaxPalE_T(i) = \{s' \in \PalSuf(T[1..i]) \mid T[i-s'] \neq T[i+1], i-s' = 1, \mbox{ or } i = n\}.
  \]

  The case where $i = n$ is trivial,
  and hence in what follows suppose that $i < n$.
  Let $c = T[i+1]$,
  and for a group $\langle s', d', t' \rangle$ of suffix palindromes
  let $a = T[i-s']$ and $b = T[i-s'-(t'-1)d']$,
  namely, $a$ (resp. $b$) is the character that immediately precedes the
  shortest (resp.\ longest) palindrome in the group
  (notice that $a = b$ when $t' = 1$).
  Then, it follows from Lemma~\ref{lem:palindromic_suffixes} (D) that
  $s', s'+d', \ldots, s'+(t'-2)d' \in \MaxPalE_T(i)$
  iff $a \neq c$.
  Also, $s' + (t'-1)d' \in \MaxPalE_T(i)$ iff $b \neq c$.
  Therefore, for each group of suffix palindromes of $T[1..i]$,
  there are only four possible cases:
  (1) all members of the group are in $\MaxPalE_T(i)$,
  (2) all members but the longest one are in $\MaxPalE_T(i)$,
  (3) only the longest member is in $\MaxPalE_T(i)$, or
  (4) none of the members is in $\MaxPalE_T(i)$.

  Now, it immediately follows from Lemma~\ref{lem:palindromic_suffixes} that
  (i) $d_{j+1} \geq d_{j}$ for $1 \leq j < g$ and
  (ii) $d_{j+1} \geq d_{j} + d_{j-1}$ holds for $1 < j < g$.
  Properties (iii) and (iv) also follow from
  the above arguments and Lemma~\ref{lem:palindromic_suffixes}.
\end{proof}

For all $1 \leq i \leq n$ we can compute $\MaxPalE_T(i)$ and $\MaxPalB_T(i)$
in total $O(n)$ time:
After computing all maximal palindromes of $T$ in $O(n)$ time,
we can bucket sort all the maximal palindromes with their ending positions
and with their beginning positions in $O(n)$ time each.

\section{Algorithm for 1-ELPS}
\label{sec:1-ELPS}

In this section, we will show the following result:
\begin{theorem} \label{theo:1_edit_theorem}
  There is an algorithm for the 1-ELPS problem
  which uses $O(n)$ time and space for preprocessing,
  and answers each query in $O(\log (\min \{\sigma, \log n\}))$
  time for single character substitution and insertion,
  and in $O(1)$ time for single character deletion.
\end{theorem}

\subsection{Algorithm for Substitutions}

In what follows, we will present our algorithm to compute
the length of the LPSs after a single
character substitution.
Our algorithm can also return an occurrence of an LPS.

Let $i$ be any position in the string $T$ of length $n$ and let $c = T[i]$.
Also, let $T' = T[1..i-1]c'T[i+1..n]$,
i.e., $T'$ is the string obtained by substituting character $c'$ for
the original character $c = T[i]$ at position $i$.
In this subsection, we assume $c \neq c'$ without loss of generality.
To compute the length of the LPSs of $T'$,
it suffices to consider maximal palindromes of $T'$.
Those maximal palindromes of $T'$ will be computed
from the maximal palindromes of $T$.

The following observation shows that some maximal palindromes
of $T$ remain unchanged
after a character substitution
at position $i$.

\begin{observation}[Unchanged maximal palindromes after a single character substitution]
  \label{obs:key_observation}
  For any position $1 \leq j < i-1$, $\MaxPalE_{T'}(j) = \MaxPalE_T(j)$.
  For any position $i+1 < j \leq n$, $\MaxPalB_{T'}(j) = \MaxPalB_T(j)$.
\end{observation}

By Observation~\ref{obs:key_observation},
for each position $i$~($1 \leq i \leq n$) of $T$,
we precompute the largest element of
$\bigcup_{1 \leq j < i-1} \MaxPalE_T(j)$
and that of
$\bigcup_{i+1 < j \leq n} \MaxPalB_T(j)$,
and store the larger one in the $i$th position of an array
$\mathcal{U}$ of length $n$.
$\mathcal{U}[i]$ is a candidate for the solution after the substitution
at position $i$.
For each position $i$,
$\bigcup_{1 \leq j < i-1} \MaxPalE_T(j)$
contains the lengths
of all maximal palindromes which end to the left of $i$,
and
$\bigcup_{i+1 < j \leq n} \MaxPalB_T(j)$
contains the lengths
of all maximal palindromes which begin to the right of $i$.
Thus, by simply scanning $\MaxPalE_T(j)$ for increasing $j = 1, \ldots, n$
and $\MaxPalB_T(j)$ for decreasing $j = n, \ldots, 1$,
we can compute $\mathcal{U}[i]$ for every position $1 \leq i \leq n$.
Since there are only $2n-1$ maximal palindromes in string $T$,
it takes $O(n)$ time to compute the whole array $\mathcal{U}$.

Next, we consider maximal palindromes of the original string $T$
whose lengths are extended in the edited string $T'$.
As above, let $i$ be the position where
a new character $c'$ is substituted for the original character $c = T[i]$.
In what follows, let $\sigma$ denote
the number of distinct characters appearing in $T$.

\begin{observation}[Extended maximal palindromes after a single character substitution]
  \label{obs:extended_maximal_pals}
  For any $s \in \MaxPalE_T(i-1)$,
  the corresponding maximal palindrome $T[i-s..i-1]$ centered at $\frac{2i-s-1}{2}$ gets extended in $T'$ iff $T[i-s-1] = c'$.
  Similarly, for any $p \in \MaxPalB_{T}(i+1)$,
  the corresponding maximal palindrome $T[i+1..i+p]$ centered at $\frac{2i+p+1}{2}$ gets extended in $T'$ iff $T[i+p+1] = c'$.
\end{observation}
\begin{lemma}\label{lem:one_edit_extended_algorithm}
  Let $T$ be a string of length $n$ over
  an integer alphabet of size polynomial in $n$.
  It is possible to preprocess $T$ in $O(n)$ time and space
  so that later
  we can compute in $O(\log(\min \{\sigma, \log n\}))$ time
  the length of the longest maximal palindromes in
  $T'$ that are extended
  after a substitution
  of a character.
\end{lemma}

\begin{proof}
  By Observation~\ref{obs:extended_maximal_pals},
  we consider maximal palindromes corresponding to\\ $\MaxPalE_T(i-1)$.
  Those corresponding to $\MaxPalB_{T}(i+1)$ can be treated similarly.
  Let $\langle s, d, t \rangle$ be an arithmetic progression
  representing a group of maximal palindromes in $\MaxPalE_T(i-1)$.
  Let us assume that the group contains more than 1 member (i.e., $t \geq 2$)
  and that $i-s \geq 2$,
  since the case where $t = 1$ or $i-s = 1$ is easier to deal with.
  Let $P_j$ denote the $j$th shortest member of the group,
  i.e., $P_1 = T[i-s..i-1]$ and $P_t = T[i-s-(t-1)d..i-1]$.
  Then, it follows from Lemma~\ref{lem:maximal_palindromes} (iv)
  that if $a$ is the character immediately preceding the occurrence of $P_1$
  (i.e., $a = T[i-s-1]$), then $a$ also immediately precedes
  the occurrences of $P_2, \ldots, P_{t-1}$.
  Hence, by Observation~\ref{obs:extended_maximal_pals},
  $P_j$~($2 \leq j < t$) gets extended in the edited text $T'$ iff $c' = a$.
  Similarly, $P_t$ gets extended iff $c' = b$,
  where $b$ is the character immediately preceding the occurrence of $P_t$.
  For each $1 \leq j \leq t$
  the final length of the extended maximal palindrome
  can be computed in $O(1)$ time by a single outward LCE query
  $\OutLCE(i-s-(j-1)d-2, i+1)$.
  Let $P'_j$ denote the extended maximal palindrome
  for each $1 \leq j \leq t$.
  Since there are only $2n-1$ maximal palindromes in string $T$ and all of them can be computed in $O(n)$ total time.

  The above arguments suggest that for each group of
  maximal palindromes,
  there are at most two distinct characters
  that can extend those palindromes
  after a single
  character substitution.
  For each position $i$ in $T$,
  let $\Sigma_i$ denote the set of characters which can extend
  maximal palindromes w.r.t. $\MaxPalE_T(i-1)$
  after a character substitution
  at position $i$.
  It now follows from Lemma~\ref{lem:maximal_palindromes}
  and from the above arguments that $|\Sigma_i| = O(\min \{\sigma, \log i\})$.
  Also, when any character in $\Sigma \setminus \Sigma_i$ is given
  for character substitution at position $i$,
  then no maximal palindromes w.r.t. $\MaxPalE_T(i-1)$ are extended.

  For each maximal palindrome $P$ of $T$,
  let
  $(i', c, l)$
  be a tuple such that
  $i'$
  is the ending position of $P$,
  and $l$ is the length of the extended maximal palindrome $P'$
  after the immediately following character
  $T[i'+1]$
  is substituted for the character
  $c = T[i'-|P|-1]$
  which immediately precedes the occurrence of $P$ in $T$.
  We then radix-sort the tuples
  $(i', c, l)$
  for all maximal palindromes in $T$ as 3-digit numbers.
  This can be done in $O(n)$ time since $T$ is over
  an integer alphabet of size polynomial in $n$.
  Then, for each position
  $i'$,
  we compute the maximum value $l_c$ for each character $c$.
  Since we have sorted the tuples
  $(i', c, l)$,
  this can also be done in total $O(n)$ time for all positions and characters.

  Let $\hat{c}$ be a special character
  which represents any character in $\Sigma \setminus \Sigma_i$
  (if $\Sigma \setminus \Sigma_i \neq \emptyset$).
  Since no maximal palindromes w.r.t. $\MaxPalE_T(i-1)$ are extended
  by $\hat{c}$,
  we associate $\hat{c}$ with the length $\l_{\hat{c}}$
  of the longest maximal palindrome w.r.t. $\MaxPalE_T(i-1)$.
  For each position $i$ and each character $c \in \Sigma_i$, if $l_c$ is less than $l_{\hat{c}}$, then we rewrite $l_c = l_{\hat{c}}$.
  We assume that $\hat{c}$ is lexicographically larger than
  any characters in $\Sigma_i$.
  For each position $i$ we store pairs $(c, l_c)$ in an array $\mathcal{E}_i$ of size
  $|\Sigma_i|+1 = O(\min \{\sigma, \log i\})$ in lexicographical order of $c$.
  See Figure~\ref{fig:one_edit_extended_algorithm}
  for a concrete example.

  \begin{figure}[tb]
    \centerline{
      \includegraphics[scale=0.7]{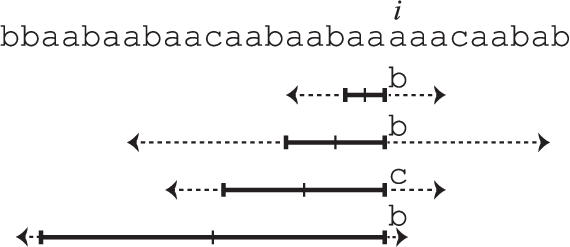}
    }
    \caption{Example for Lemma~\ref{lem:one_edit_extended_algorithm},
      with string $\mathtt{bbaabaabaacaabaabaaaaacaabab}$ where the character $\mathtt{a}$ at position $i = 20$ is to be substituted. There are four maximal palindromes ending at position $19$, whose lengths are represented by two groups $\langle 2, 3, 3 \rangle$ and $\langle 17, 9, 1 \rangle$. For the first group, $\mathtt{c}$ precedes the longest maximal palindrome and $\mathtt{b}$ precedes all the other maximal palindromes. The second group contains only one maximal palindrome and $\mathtt{b}$ precedes it. The largest extended lengths are 21 for $\mathtt{b}$, and 14 for $\mathtt{c}$.
      Thus we have $\mathcal{E}_i = [(\mathtt{b}, 21), (\mathtt{c}, 17), (\hat{c}, 17)]$,
      where $17$ is the length of the longest maximal palindrome ending at position $19$ in the original string.
    }\label{fig:one_edit_extended_algorithm}
  \end{figure}

  Then, given a character $c'$ to substitute for the character
  at position $i$~($1 \leq i \leq n$), we can binary search $\mathcal{E}_i$
  for $(c', l_{c'})$ in $O(\log (\min \{\sigma, \log n\}))$ time.
  If $c'$ is not found in the array,
  then we take the pair $(\hat{c}, l_{\hat{c}})$ from the last entry of $\mathcal{E}_i$.
  We remark that $\sum_{i=1}^n |\mathcal{E}_i| = O(n)$
  since there are $2n-1$ maximal palindromes in $T$
  and for each of them at most two distinct characters contribute to
  $\sum_{i=1}^n |\mathcal{E}_i|$.
\end{proof}

By using hashing instead of binary search,
the query can be solved in $O(1)$ time after $O(n)$ expected time and $O(n)$ space for preprocessing.

Finally, we consider maximal palindromes of the original string $T$
whose lengths are shortened in the edited string $T'$
after substituting a character $c'$ for the original character at position $i$.

\begin{observation}[Shortened maximal palindromes after a single character substitution]
  \label{obs:shortened_maximal_pals}
  A maximal palindrome $T[b..e]$ of $T$ gets
  shortened in $T'$ iff $b \leq i \leq e$ and $i \neq \frac{b+e}{2}$.
\end{observation}

\begin{lemma}\label{lem:shortened_maximal_pals}
  It is possible to preprocess
  a string $T$ of length $n$ in $O(n)$ time and space so that later
  we can compute in $O(1)$ time
  the length of the longest maximal palindromes of $T'$
  that are shortened
  after a substitution
  of a character.
\end{lemma}

\begin{proof}
  Now we consider shortened maximal palindromes whose center $\frac{b+e+1}{2}$ is less than $i$.
  Shortened maximal palindromes whose center $\frac{b+e+1}{2}$ is more than $i$ can be treated similarly.
  Let $\mathcal{S}$ be an array of length $n$ such that
  $\mathcal{S}[i]$ stores the length of the longest maximal palindrome
  after shortening
  by the character substitution at position $i$.
  To compute $\mathcal{S}$,
  we preprocess $T$ by scanning it from left to right.
  Suppose that we have computed $\mathcal{S}[i]$.
  By Observation~\ref{obs:shortened_maximal_pals},
  we have that $\mathcal{S}[i] = 2(i-\frac{b+e+1}{2})$ where
  $T[b..e]$ is the longest maximal palindrome
  of $T$ satisfying the conditions of Observation~\ref{obs:shortened_maximal_pals}.
  In other words, $T[b..e]$ is the maximal palindrome
  of $T$ of which the center $\frac{b+e}{2}$ is the smallest possible
  under the conditions.

  For any position $i < i'' \leq e$,
  we have that
  $\mathcal{S}[i''] = \mathcal{S}[i] + 2(i''-i)$.
  For the next position $e+1$,
  we can compute $\mathcal{S}[e+1]$ in amortized $O(1)$ time
  by simply scanning the array $\mathcal{M}$ from position $\frac{b+e+1}{2}$
  to the right until finding the first (i.e., leftmost) entry of $\mathcal{M}$
  which stores the length of a maximal palindrome
  whose ending position is at least $e+1$.
  Hence, we can compute $\mathcal{S}$ in $O(n)$ total time and space.
\end{proof}

Remark that maximal palindromes of $T$ which do not satisfy
the conditions of Observations~\ref{obs:extended_maximal_pals} and
\ref{obs:shortened_maximal_pals} are also unchanged in $T'$.
The following lemma summarizes this subsection:
\begin{lemma}
  Let $T$ be a string of length $n$ over an integer alphabet of
  size polynomial in $n$.
  It is possible to preprocess $T$ of length $n$
  in $O(n)$ time and space
  so that later we can compute in $O(\log (\min \{\sigma, \log n\}))$ time
  the length of the LPSs of
  the edited string $T'$
  after a substitution of a character.
\end{lemma}

\subsection{Algorithm for Deletions}

Suppose the character at position $i$ is deleted from the string $T$,
and let $T'_i$ denote the resulting string, namely $T'_i= T[1..i-1]T[i+1..n]$.
Now the RL factorization of $T$ comes into play:
Observe that for any $1 \leq i \leq n$, $T'_i = T'_{\RLFbeg(i)} = T'_{\RLFend(i)}$.
Thus, it suffices for us to consider only the boundaries of the RL factors for $T$.

It is easy to see that an analogue of Observation~\ref{obs:key_observation}
for unchanged maximal palindromes holds, as follows.

\begin{observation}[Unchanged maximal palindromes after a single character deletion]
  \label{obs:key_observation_deletions}
  For any position $1 \leq j < \RLFend(i)-1$, $\MaxPalE_{T'_i}(j) = \MaxPalE_T(j)$.
  For any position $\RLFbeg(i)+1 < j \leq n$,   $\MaxPalB_{T'_i}(j) = \MaxPalB_T(j)$.
\end{observation}
See Figure~\ref{fig:delete_unchanged} for a concrete example  of Observation~\ref{obs:key_observation_deletions}.

\begin{figure}[tbh]
  \centerline{
    \includegraphics[scale=0.75]{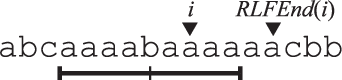}
    \hfil
    \includegraphics[scale=0.75]{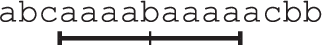}
  }
  \caption{Example for Observation~\ref{obs:key_observation_deletions}. The maximal palindrome $\mathtt{aaaabaaaa}$ does not change if the character $\mathtt{a}$ at position $i$ is deleted. The result is the same if the character $\mathtt{a}$ at position $\RLFend(i)$ is deleted.}
  \label{fig:delete_unchanged}
\end{figure}

By the above observation, we can compute the lengths of the longest unchanged maximal palindromes
for the boundaries of all RL factors in $O(n)$ time,
in a similar way to the case of substitution.

Clearly the new character at position $\RLFend(i)$
in the string
$T'_i$
after a deletion
is always $T[\RLFend(i)+1]$,
and a similar argument holds for $\RLFbeg(i)$.
Thus, we have the following observation for extended maximal palindromes
after a deletion,
which is an analogue of Observation~\ref{obs:extended_maximal_pals}.

\begin{observation}[Extended maximal palindromes after a single character deletion]
  \label{obs:extended_maximal_pals_deletion}
  For any $s \in \MaxPalE_T(\RLFend(i)-1)$,
  the corresponding maximal palindrome $T[\RLFend(i)-s..\RLFend(i)-1]$
  centered at $\frac{2\RLFend(i)-s-1}{2}$ gets extended in $T'_i$ iff $T[\RLFend(i)-s-1] = T[\RLFend(i)+1]$.
  Similarly, for any $p \in \\ \MaxPalB_{T}(\RLFbeg(i)+1)$,
  the corresponding maximal palindrome $T[\RLFbeg(i)+1..\RLFbeg(i)+p]$
  centered at $\frac{2\RLFbeg(i)+p+1}{2}$ gets extended in $T'_i$ iff $T[\RLFbeg(i)+p+1] = T[\RLFbeg(i)-1]$.
\end{observation}
See Figure~\ref{fig:delete_extended} for a concrete example for Observation~\ref{obs:extended_maximal_pals_deletion}.

\begin{figure}[tbh]
  \centerline{
    \includegraphics[scale=0.75]{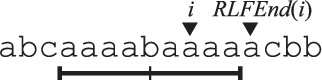}
    \hfil
    \includegraphics[scale=0.75]{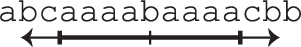}
  }
  \caption{Example for Observation~\ref{obs:extended_maximal_pals_deletion}. The maximal palindrome $\mathtt{aaaabaaaa}$ gets extended to
    $\mathtt{bcaaaabaaaacb}$ if the character $\mathtt{a}$ at position $i$ is deleted. The result is the same if the character $\mathtt{a}$ at position $\RLFend(i)$ is deleted.}
  \label{fig:delete_extended}
\end{figure}

Since the new characters that come from the left and the right
of each deleted position are always unique,
for each $\RLFend(i)$ and $\RLFbeg(i)$,
the longest maximal palindrome that gets extended
after a deletion
is also unique.
Overall, we can precompute their lengths for all positions $1 \leq i \leq n$
in $O(n)$ total time by using $O(n)$ outward LCE queries
in the original string $T$.

Next, we consider those maximal palindromes
which get shortened
after a single
character deletion.
We have the following observation which is analogue to Observation~\ref{obs:shortened_maximal_pals}.

\begin{observation}[Shortened maximal palindromes after a deletion]
  \label{obs:shortened_maximal_pals_deletions}
  A maximal palindrome $T[b..e]$ of $T$ gets
  shortened in $T'_i$ iff $b \leq \RLFbeg(i)$
  and $\RLFend(i) \leq e$.
\end{observation}
See Figure~\ref{fig:delete_shortened} for a concrete example for Observation~\ref{obs:shortened_maximal_pals_deletions}.

\begin{figure}[tbh]
  \centerline{
    \includegraphics[scale=0.75]{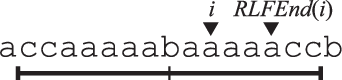}
    \hfil
    \includegraphics[scale=0.75]{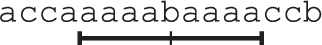}
  }
  \caption{Example for Observation~\ref{obs:shortened_maximal_pals_deletions}. The maximal palindrome $\mathtt{ccaaaaabaaaaacc}$ gets shortened to
    $\mathtt{aaaabaaaa}$ if the character $\mathtt{a}$ at position $i$ is deleted. The result is the same if the character $\mathtt{a}$ at position $\RLFend(i)$ is deleted.}
  \label{fig:delete_shortened}
\end{figure}

By Observation~\ref{obs:shortened_maximal_pals_deletions},
we can precompute the length of the longest maximal palindrome
after deleting the characters at the beginning and ending positions
of each RL factors in $O(n)$ total time,
using an analogous way to Lemma~\ref{lem:shortened_maximal_pals}.

Summing up all the above discussions, we obtain the following lemma:
\begin{lemma}
  It is possible to preprocess a string $T$ of length $n$
  in $O(n)$ time and space
  so that later we can compute in $O(1)$ time
  the length of the LPSs of
  the edited string
  $T'_i$
  after a deletion
  of a character.
\end{lemma}

\subsection{Algorithm for Insertion}

Consider the insertion of a new character $c'$
between the $i$th and $(i+1)$th positions in $T$,
and let $T' = T[1..i]c'T[i+1..n]$.
If $c' \neq T[i]$ and $c' \neq T[i+1]$,
we can find the length of the LPSs
in $T'$ in a similar way to substitution
as follows:
for the maximal palindromes in $T'$ whose center is less than or equal to $i$,
we regard as $c'$ is substituted for $T[i+1]$.
Then, we can compute shortened or unchanged maximal palindromes by using exactly the same algorithm for substitution.
Extended maximal palindromes also can be computed in a similar way to substitution by taking care of the positions of outward LCE queries.
The maximal palindromes in $T'$ whose center is more than or equal to $i+2$ can be computed similarly by regarding as $c'$ is substituted for $T[i]$.
The remaining maximal palindromes in $T'$ with the center $i+0.5$, $i+1$, or $i+1.5$ can be computed easily.
The length of the maximal palindrome in $T'$ with the center $i+1$ is equal to it of the maximal palindrome of $T$ with the center $i+0.5$ plus one.
Also, the maximal palindromes in $T'$ with the center $i+0.5$ or $i+1.5$ are $\varepsilon$.
Otherwise (if $c' = T[i]$ or $c' = T[i+1]$),
we can find the length of the LPSs
in $T'$ in a similar way to deletion
since $c'$ is merged to an adjacent RL factor.
Thus, we have the following.

\begin{lemma}
  Let $T$ be a string of length $n$
  over an integer alphabet of size polynomial in $n$.
  It is possible to preprocess in $O(n)$ time and space
  string $T$ so that later we can compute
  in $O(\log (\min \{\sigma, \log n\}))$ time
  the length of the LPSs of
  the edited string $T'$
  after a insertion
  of a character.
\end{lemma}

\subsection{Hashing}

By using hashing instead of binary searches on arrays,
the following corollary is immediately obtained from Theorem~\ref{theo:1_edit_theorem}.

\begin{corollary}
  There is an algorithm for the 1-ELPS problem
  which uses $O(n)$ \emph{expected} time and $O(n)$ space for preprocessing,
  and answers each query in $O(1)$
  time for single character substitution, insertion, and deletion.
\end{corollary}

\section{Algorithm for $\ell$-ELPS}
\label{sec:block-ELPS}

In this section, we consider the $\ell$-ELPS problem
where an existing block of length $\ell'$ in the string $T$ is replaced with
a new block of length $\ell$.
This generalizes substitution when
$\ell' > 0$ and $\ell > 0$,
insertion when $\ell' = 0$ and $\ell > 0$,
and deletion when $\ell' > 0$ and $\ell = 0$.

This section presents the following result:
\begin{theorem}
  There is an $O(n)$-time and space preprocessing for the $\ell$-ELPS problem
  such that each query can be answered in $O(\ell + \log \log n)$ time,
  where $\ell$ denotes the length of the block
  after an edit.
\end{theorem}
Note that the time complexity for our algorithm is independent
of the length of the original block to edit.
Also, the length $\ell$ of a new block is arbitrary.

Consider
the substitution of a string $X$
of length $\ell$
for the substring $T[i_b..i_e]$ beginning at position $i_b$ and ending at position $i_e$,
where $i_e-i_b+1 = \ell'$ and $X \neq T[i_b..i_e]$.
Let $T'' = T[1..i_b-1]XT[i_e+1..n]$ be the string after the edit.
In order to compute (the lengths of) maximal palindromes
that are affected by the block-wise edit operation,
we need to know the first (leftmost) mismatching position
between $T$ and $T''$,
and that between $\rev{T}$ and $\rev{T''}$.
Let $h$ and $l$ be the smallest integers such that
$T[h] \neq T''[h]$ and $\rev{T}[l] \neq \rev{T''}[l]$, respectively.
If such $h$ does not exist, then let $h = \min\{|T|, |T''|\}+1$.
Similarly, if such $l$ does not exist, then let $l = \min\{|T|, |T''|\}+1$.
Let $j_1 = \lcp(T[i_b..n], XT[i_e..n])$, $j_2=\lcp(\rev{(T[1..i_e])}, \rev{(T[1..i_b]X)})$,
$p_b = i_b+j_1$, and $p_e = i_e-j_2$.
There are two cases:
(1) If $j_1=j_2=0$, then
the first and last characters of $T[i_b..i_e]$ differ from those of $X$.
In this case, we have $i_b = h$ and $i_e = n-l+1$.
We use these positions $i_b$ and $i_e$
to compute maximal palindromes after the block-wise edit.
(2) Otherwise, we have $p_b=i_b+j_1=h$ and $p_e=i_e-j_2=n-l+1$.
We use these positions $p_b$ and $p_e$
to compute maximal palindromes after the block-wise edit.
See Figure~\ref{fig:mismatching position} for illustration.
(2-1) is a sub-case of Case (2) with $p_b(=i_b+j_1) < p_e(=i_e-j_2)$.
In the example of this figure, the substring $T[i_b..i_e]=\mathtt{abbccbabcb}$ is substituted by $X=\mathtt{abbcb}$.
(2-2) is a sub-case of Case (2) with $p_e(=i_e-j_2) < p_b(=i_b+j_1)$.
In the example of this figure, the substring $T[i_b..i_e]=\mathtt{abbcc}$ is substituted by $X=\mathtt{abbbcc}$.
(2-3) is the example $T[i_b..i_e]=\mathtt{abbccbabcb}$ is substituted with $X=\mathtt{abbccb}$ and this is the sub-case of Case (2) with $j_1>\ell$.
Note that $p_b$ and $p_e$ are only used to compute (the lengths of) maximal palindromes and the fact that $T[i_b..i_e]$ is substituted with $X$ is never changed in any case.

In the following,
we describe our algorithm for Case (1).
Case (2) can be treated similarly,
by replacing $i_b$ and $i_e$ with $p_b$ and $p_e$, respectively.
Our algorithm can handle the case where $p_e < p_b$.
Remark that $p_b$ and $p_e$ can be computed in $O(\ell)$ time
by na\"ive character comparisons and a single LCE query each.

\begin{figure}[htpb]
  \centering
    \begin{tabular}{c}

      \begin{minipage}{0.50\hsize}
        \centering
          \includegraphics[keepaspectratio, scale=0.3, angle=0]
                          {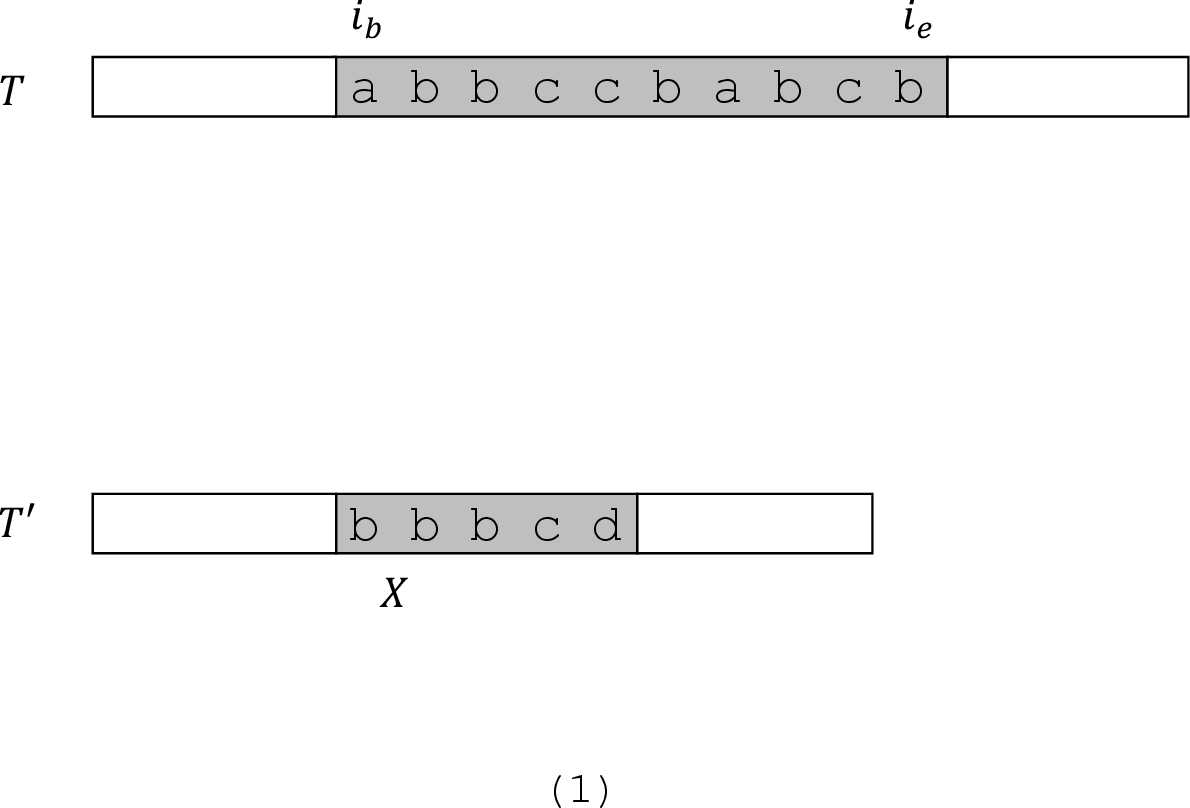}
      \end{minipage}
      \hfill
      \begin{minipage}{0.50\hsize}
        \centering
          \includegraphics[keepaspectratio, scale=0.3, angle=0]
                          {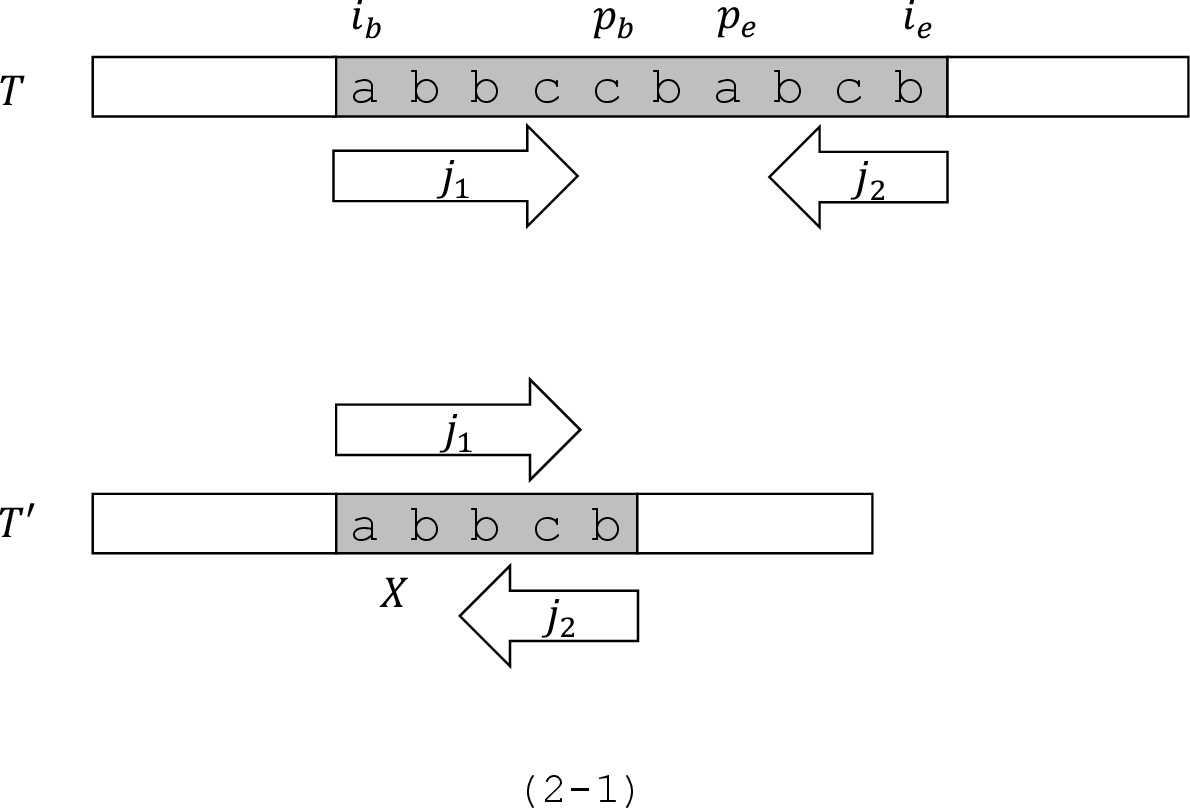}
      \end{minipage} \\

      \begin{minipage}{0.06\hsize}
        \vspace{5mm}
      \end{minipage} \\

      \begin{minipage}{0.50\hsize}
        \centering
          \includegraphics[keepaspectratio, scale=0.3, angle=0]
                          {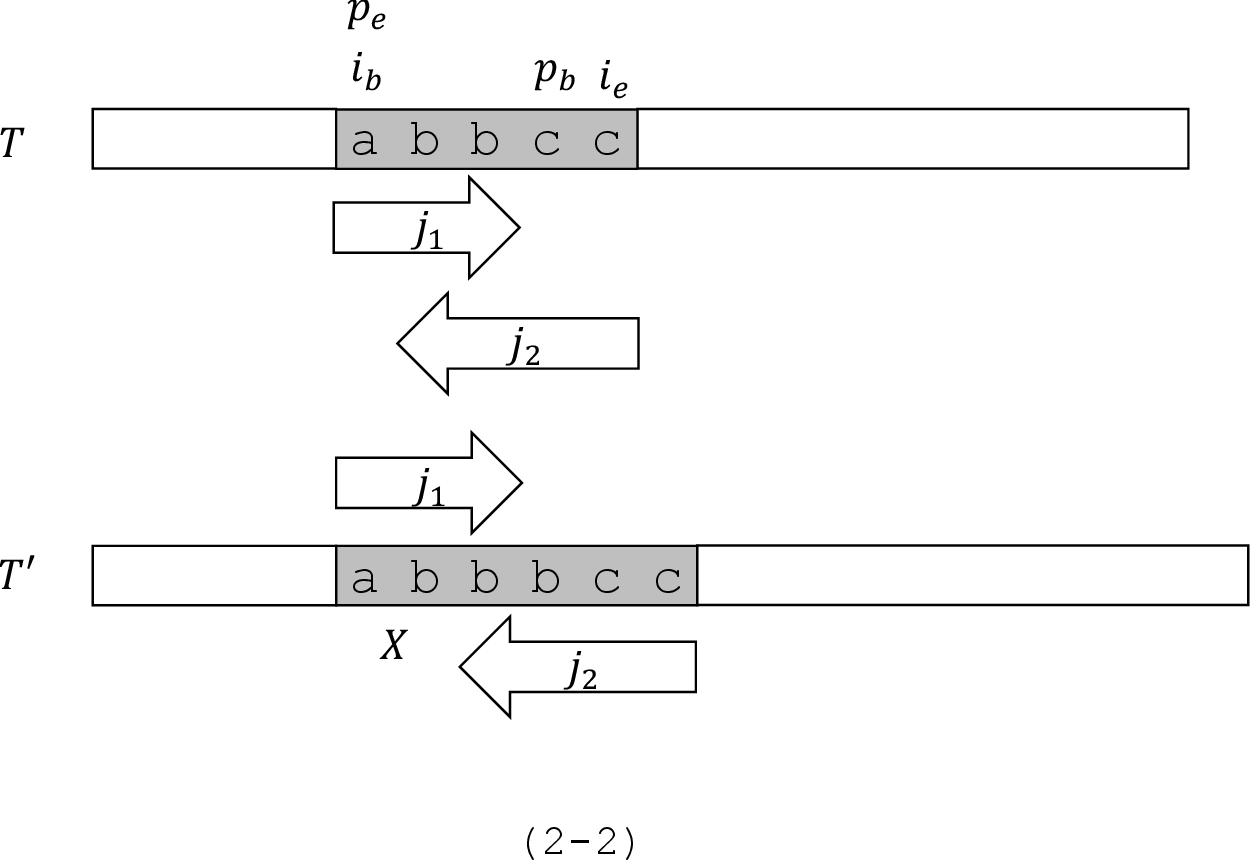}
      \end{minipage}
      \hfill
      \begin{minipage}{0.50\hsize}
        \centering
          \includegraphics[keepaspectratio, scale=0.3, angle=0]
                          {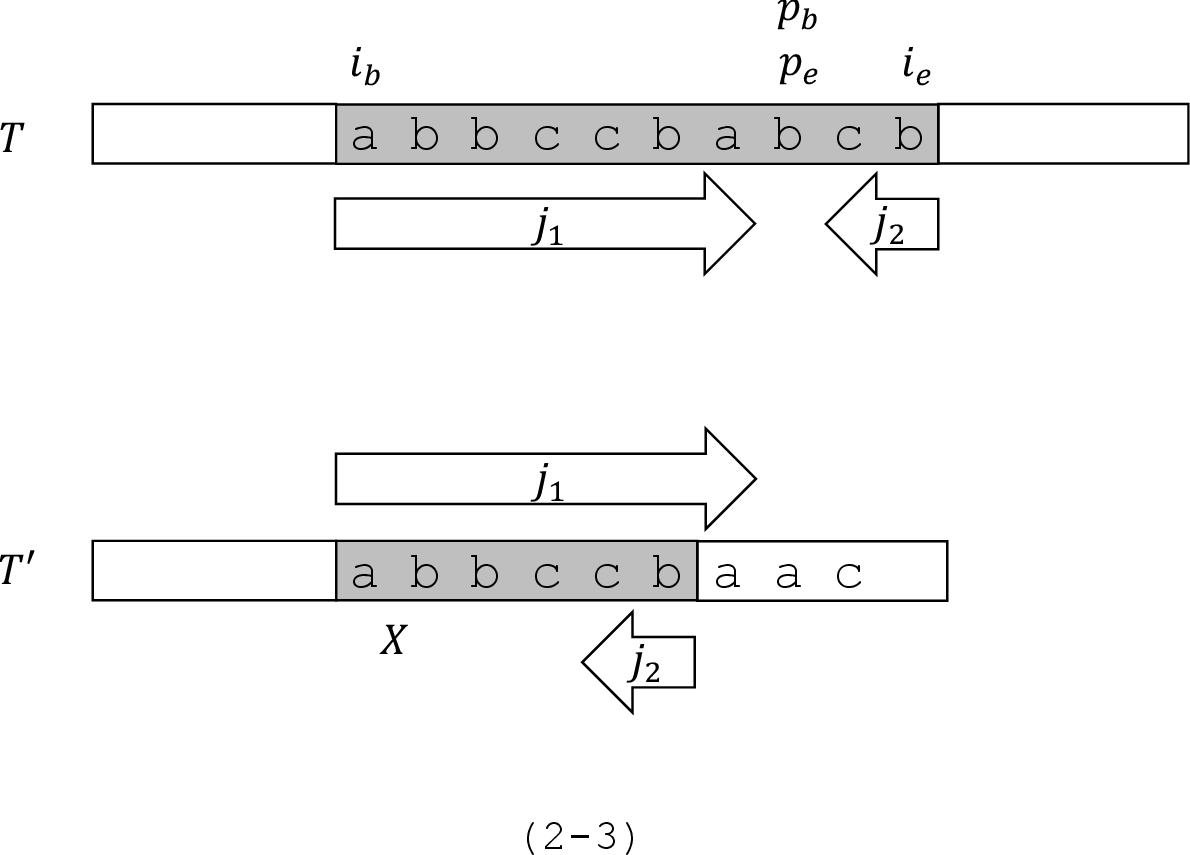}
      \end{minipage}
     \end{tabular}
     \caption{Illustration for the mismatching position between $T$ and $T''$, and that between $T^R$ and $T''^{R}$.
     In particular, (2-1) is the sub-case of Case (2) with $p_b(=i_b+4) < p_e(=i_e-3)$, (2-2) is the sub-case of Case (2) with $p_e(=i_e-4) < p_b(=i_b+3)$, and (2-3) is the sub-case of Case (2) with $j_1(=7)>\ell(=6)$.
     }
     \label{fig:mismatching position}
\end{figure}

\subsection{Unchanged Maximal Palindromes}
We have the following observation
for those of maximal palindromes in $T$ whose
lengths do not change, which is a generalization of Observation~\ref{obs:key_observation}.
\begin{observation}[Unchanged maximal palindromes after a block edit]
  \label{obs:key_observation_block}
  For any position $1 \leq j < i_b-1$, $\MaxPalE_{T''}(j) = \MaxPalE_T(j)$.
  For any position $i_e+1 < j \leq n$, $\MaxPalB_{T''}(j) = \MaxPalB_T(j)$.
\end{observation}
Hence, we can use the same $O(n)$-time preprocessing
and $O(1)$ queries as the 1-ELPS problem:
When we consider substitution for an existing block $T[i_b..i_e]$,
we take the length of the longest maximal palindrome
ending before $i_b-1$ and that of the longest maximal palindrome
beginning after $i_e+1$
as candidates for a solution to the $\ell$-ELPS query.
Hence, we obtain the following lemma.
\begin{lemma}
  We can preprocess a string $T$ of length $n$ in $O(n)$ time and space
  so that later we can compute in $O(1)$ time
  the length of the LPS of $T''$
  that are unchanged
  after a block edit.
\end{lemma}

\subsection{Extended Maximal Palindromes}
Next, we consider the maximal palindromes of $T$
that get extended
after a block edit.
\begin{observation}[Extended maximal palindromes after a block edit]
  \label{obs:extended_maximal_pals_block}
  For any $s \in \MaxPalE_T(i_b-1)$,
  the corresponding maximal palindrome $T[i_b-s..i_b-1]$ centered at $\frac{2i_b-s-1}{2}$ gets extended in $T''$ iff $\OutLCE_{T''}(i_b-s-1, i_b) \geq 1$.
  Similarly, for any $p \in \MaxPalB_{T}(i_e+1)$,
  the corresponding maximal palindrome $T[i_e+1..i_e+p]$ centered at $\frac{2i_e+p+1}{2}$ gets extended in $T''$ iff $\OutLCE_{T''}(i_e, i_e+p+1) \geq 1$.
\end{observation}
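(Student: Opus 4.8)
The plan is to prove the two ``iff'' statements directly from the definitions of maximal palindrome, outward LCE, and the structure of $T'$, treating the left case in full and obtaining the right case by symmetry. The observation driving everything is that, since we are in Case (1), the prefix $T[1..i_b-1]$ and the suffix $T[i_e+1..n]$ survive verbatim in $T'$, while the characters immediately flanking them on the block side are changed: the character right after position $i_b-1$ is now $X[1] \neq T[i_b]$, and the character right before position $i_e+1$ is now $X[\ell] \neq T[i_e]$. These two inequalities are exactly the hypothesis $j_1=j_2=0$ of Case (1), and they are what can turn a previously non-extensible maximal palindrome into an extensible one.

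First I would fix $s \in \MaxPalE_T(i_b-1)$ and consider the corresponding maximal palindrome $P = T[i_b-s..i_b-1]$, whose center is $\frac{2i_b-s-1}{2}$. Since $P$ lies entirely in the unchanged prefix, $P = T'[i_b-s..i_b-1]$, so $P$ is still a palindrome in $T'$ with the same center. Consequently the maximal palindrome of $T'$ with center $\frac{2i_b-s-1}{2}$ has length at least $s$; and because $P$ has length exactly $s$ in $T$ (it is maximal there), $P$ \emph{gets extended} precisely when this $T'$-palindrome is strictly longer, i.e.\ when it can be pushed outward by at least one character, namely $T'[i_b-s-1] = T'[i_b]$. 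By definition this single match is exactly $\OutLCE_{T'}(i_b-s-1, i_b) \geq 1$, giving the first equivalence. The only corner to check separately is $i_b-s=1$, where $P$ starts at the left end of the string: here no outward extension is possible and, consistently, $\OutLCE_{T'}(i_b-s-1,i_b)$ vanishes under the convention $\rev{(T'[1..0])} = \varepsilon$.

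The right case is handled symmetrically, with $\MaxPalE$ replaced by $\MaxPalB$: for $s \in \MaxPalB_T(i_e+1)$ the palindrome $T[i_e+1..i_e+s]$ lies in the unchanged suffix, hence persists in $T'$ with center $\frac{2i_e+s+1}{2}$ and length at least $s$, and it gets extended if and only if its (now changed) left neighbour $X[\ell]$ matches the far character $T[i_e+s+1]$, i.e.\ $\OutLCE_{T'}(i_e, i_e+s+1) \geq 1$.

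I expect the only real obstacle to be bookkeeping rather than mathematics. One must verify that the two endpoints fed to each outward-LCE query are the correct positions relative to $T'$ (the outer endpoint sitting in the preserved region, the inner flanking character sitting at the block boundary), and that the edit changes the string at exactly that flanking character and nowhere closer. Once the Case (1) mismatch hypothesis $T[i_b]\neq X[1]$ and $T[i_e]\neq X[\ell]$ is used to pin down these flanking characters, both equivalences collapse to the definition of $\OutLCE_{T'}$.
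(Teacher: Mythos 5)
Your proof is correct and is essentially the intended argument: the paper states this observation without proof (citing the earlier CPM 2018 work), and what is implicitly relied on is exactly your definitional unwinding — the maximal palindrome survives verbatim in the unchanged prefix (resp.\ suffix) of $T'$, so it gets extended iff its two flanking characters in $T'$ match, which by definition is $\OutLCE_{T'}(i_b-s-1, i_b) \geq 1$ (resp.\ $\OutLCE_{T'}(i_e, i_e+s+1) \geq 1$), with the boundary convention handling the string-end case. One cosmetic remark: the equivalence itself holds for any edit and does not actually need the Case (1) mismatch hypothesis $T[i_b] \neq X[1]$, $T[i_e] \neq X[\ell]$; that hypothesis only matters for the surrounding algorithm (to decide which positions' palindrome sets to query), not for this iff.
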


\subsubsection{Computation of Extensions}
It follows from Observation~\ref{obs:extended_maximal_pals_block}
that for all
maximal palindromes corresponding to $\MaxPalE_T(i_b-1)$ or $\MaxPalB_{T}(i_e+1)$, it suffices to compute outward LCE queries efficiently in the edited string $T''$.
The following lemma shows how to efficiently compute the extensions of any given maximal palindromes
that end at position $i_b-1$.
Those that begin at position $i_e+1$ can be treated similarly.
\begin{lemma}\label{lem:efficient_batched_extension}
  Let $T$ be a string of length $n$
  over an integer alphabet of size polynomially bounded in $n$.
  We can preprocess $T$ in $O(n)$ time and space so that later,
  given a list of any $f$ maximal palindromes from $\MaxPalE_{T}(i_b-1)$,
  we can compute in $O(\ell + f)$ time
  the extensions of those $f$ maximal palindromes in the edited string $T''$,
  where $\ell$ is the length of a new block.
\end{lemma}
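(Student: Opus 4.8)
The plan is to reduce the computation of each palindrome's extension to a single outward LCE query in $T'$ and then to show that, although $T'$ changes with every query, such a query decomposes into one part that scans the new block $X$ and one part that can be answered inside the static string $T$.

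First I would translate Observation~\ref{obs:extended_maximal_pals_block} into a concrete quantity: for a maximal palindrome $T[i_b-s..i_b-1]\in\MaxPalE_T(i_b-1)$, the amount by which it grows on each side in $T'$ is exactly $e(s)=\OutLCE_{T'}(i_b-s-1,i_b)$, so the extended palindrome has length $s+2e(s)$. Because $T'[1..i_b-1]=T[1..i_b-1]$ while $T'[i_b..i_b+\ell-1]=X$ and $T'[i_b+\ell..]=T[i_e+1..n]$, this outward LCE splits into two phases. In phase one we match the reversed prefix $\rev{T[1..i_b-s-1]}$ against the block $X$, obtaining $k_1(s)=\lcp(\rev{T[1..i_b-s-1]},X)\le\ell$; in phase two, only if $k_1(s)=\ell$ (the whole block is consumed), we continue past the block, which amounts to the query $\OutLCE_T(i_b-s-\ell-1,i_e+1)$ inside the unedited string $T$. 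Hence
\[
e(s)=\begin{cases} k_1(s) & \text{if } k_1(s)<\ell,\\ \ell+\OutLCE_T(i_b-s-\ell-1,\,i_e+1) & \text{if } k_1(s)=\ell,\end{cases}
\]
with the usual capping when an index falls outside $[1,n]$.

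The second phase is easy: preprocessing the suffix tree of $T\$\rev T\#$ with a lowest-common-ancestor structure answers each $\OutLCE_T$ query in $O(1)$ time, contributing $O(f)$ overall. The crux is phase one, where for $f$ different left endpoints $p=i_b-s-1$ I must compute $\lcp(\rev{T[1..p]},X)$ without touching all of $T$. Writing $q=n-p+1$, note $\rev{T[1..p]}=\rev T[q..]$ is a suffix of the fixed string $\rev T$, so the task is to compute $\lcp(X,\rev T[q..])$ for $f$ chosen values of $q$, always starting $X$ at its first character. My plan is to build, in preprocessing, the suffix tree of $\rev T$ augmented with an LCA data structure and with a pointer $\lambda_q$ to the leaf of each suffix $\rev T[q..]$. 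At query time I read $X$ once from the root, following its characters as far as they match, reaching a locus of string-depth $m_X$ whose lower endpoint is a node $w_X$; this single traversal costs $O(\ell)$. Then for each palindrome I return $k_1(s)=\min\{m_X,\ d(\mathrm{lca}(w_X,\lambda_q))\}$, where $d(\cdot)$ is string-depth, because the LCA of the locus of $X$ and the leaf $\rev T[q..]$ spells exactly $\lcp(X[1..m_X],\rev T[q..])$, and this equals $\lcp(X,\rev T[q..])$ since no match with a specific suffix can exceed the globally matched length $m_X$. Each such query is $O(1)$.

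Putting the phases together gives $O(\ell)$ for the single traversal of $X$ plus $O(1)$ per palindrome, i.e.\ $O(\ell+f)$ query time on top of the $O(n)$-time, $O(n)$-space preprocessing of the two suffix trees. The main obstacle I anticipate is phase one: I must navigate the suffix tree of $\rev T$ by the arbitrary characters of $X$ over an integer alphabet in $O(\ell)$ total time (which I would secure by equipping branching nodes with perfect hashing built during preprocessing), and I must handle the locus of $X$ correctly when it falls in the middle of an edge --- this is exactly what the cap $\min\{m_X,\cdot\}$ taken against the lower node $w_X$ takes care of, and it also disposes of the case $m_X<\ell$ where $X$ leaves the tree before being exhausted. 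The remaining care is routine boundary handling when $i_b-s-1<1$ or when the phase-two extension reaches the ends of $T$.
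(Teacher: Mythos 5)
Your approach is genuinely different from the paper's, and its combinatorial core is sound. The paper never matches $X$ against a suffix tree: it processes the $f$ palindromes sequentially, maintaining the length $\tau$ of the longest prefix of $Z$ matched so far, and for each next palindrome it \emph{transfers} that match via a single $\LeftLCE_T$ query between the two left contexts (if the LCE value $\delta < \tau$, the new extension is exactly $\delta$; otherwise it resumes na\"ive character comparisons of $X$ from position $\tau+1$, falling back to $\OutLCE_T$ once $X$ is exhausted). This amortizes to at most $\ell$ matching comparisons in total plus $O(1)$ LCE queries and mismatches per palindrome, and it uses only position-based character equality tests and LCA/LCE queries --- no dictionary or navigation structures at all. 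Your scheme instead makes each palindrome an independent $O(1)$ computation after one $O(\ell)$ root descent: your identity $k_1(s)=\min\{m_X,\, d(\mathrm{lca}(w_X,\lambda_q))\}$ is correct (capping by $m_X$ properly handles a mid-edge locus and the case where $X$ leaves the tree early), and your phase-two reduction to $\OutLCE_T(i_b-s-\ell-1, i_e+1)$ coincides with what the paper does. Per-palindrome order-independence is a genuine aesthetic advantage of your version.

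However, the obstacle you flagged is not routine, and it is the one place your proof falls short of the stated bounds. Descending the suffix tree of $\rev{T}$ guided by the characters of $X$ in $O(\ell)$ total time, over an integer alphabet of size polynomial in $n$, requires constant-time child lookups at branching nodes. Perfect hashing provides $O(1)$ lookups, but its construction is randomized ($O(n)$ \emph{expected} time); known deterministic static dictionaries require $\omega(n)$ construction time (on the order of $n\log n$ or close to it). So your preprocessing is either Las Vegas $O(n)$ expected time or deterministically superlinear, whereas the lemma claims deterministic $O(n)$ time and the paper's proof achieves it. The deterministic fallbacks --- binary search or predecessor structures over each node's children --- cost $O(\log n)$ or $O(\log\log n)$ per character of $X$, degrading the query to, e.g., $O(\ell\log\log n + f)$, which breaks $O(\ell+f)$ and would propagate into the paper's final $O(\ell+\log\log n)$ theorem. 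This alphabet-navigation issue is precisely what the paper's character-comparison/LCE-transfer trick is designed to sidestep: it never walks a tree by pattern characters. To repair your argument within the claimed bounds, you would need to replace the root descent by something navigation-free, such as the paper's amortized matching.
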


\begin{proof}
  Let us remark that the maximal palindromes in the list can be given
  to our algorithm in any order.
  Firstly, we compute the extensions of given maximal palindromes from the list
  until finding the first maximal palindrome whose extension $\tau$
  is at least one, and let $s'$ be the length of this maximal palindrome.
  Namely, $s'+2\tau$ is the length of the extended maximal palindrome for $s'$,
  and the preceding maximal palindromes (if any) were not extended.
  Let $s$ be the length of the next maximal palindrome from the list after $s'$,
  and now we are to compute the extension $\lambda$ for $s$.
  See also Figure~\ref{fig:suffix_pals_extended}.
\begin{figure}[tb]
  \centering{
    \includegraphics[clip, width=3.0in]{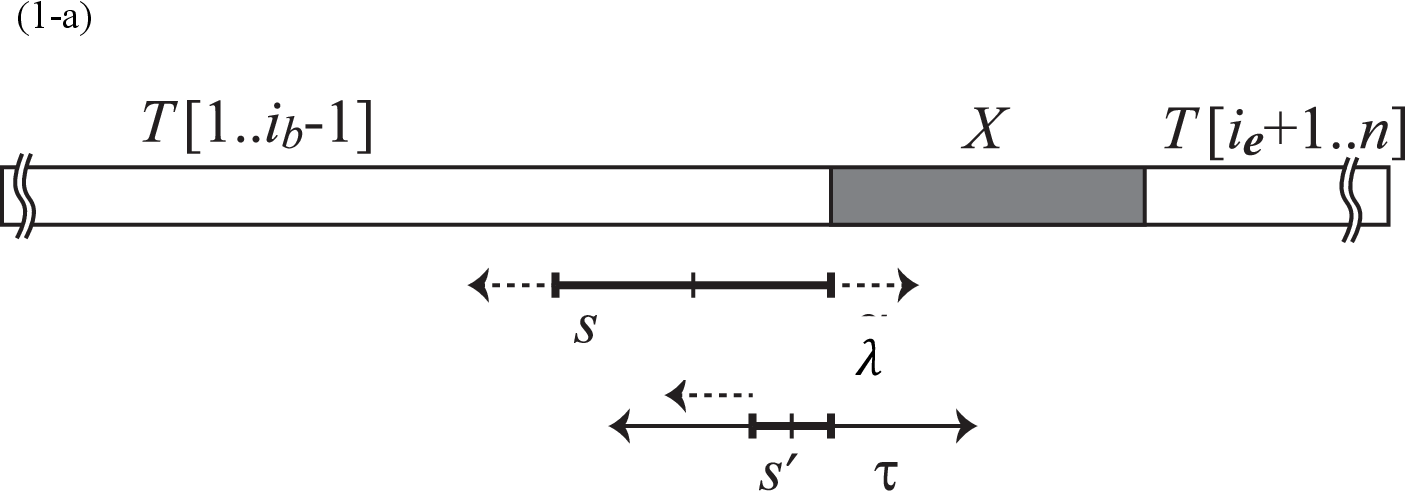}
    \\
    \includegraphics[clip, width=3.0in]{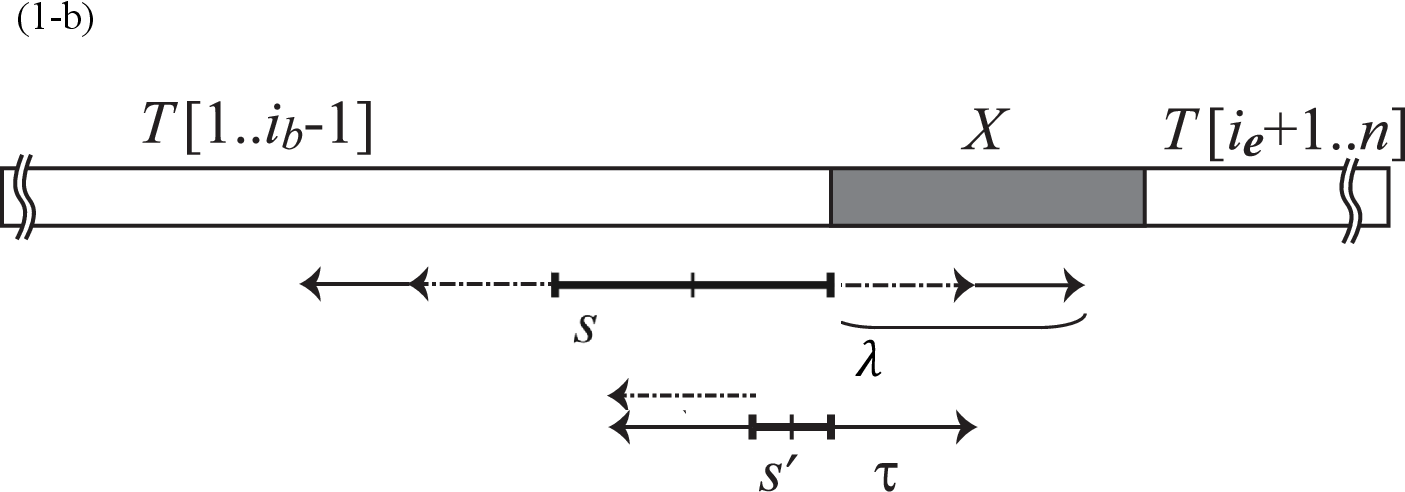}
    \\
    \includegraphics[clip, width=3.0in]{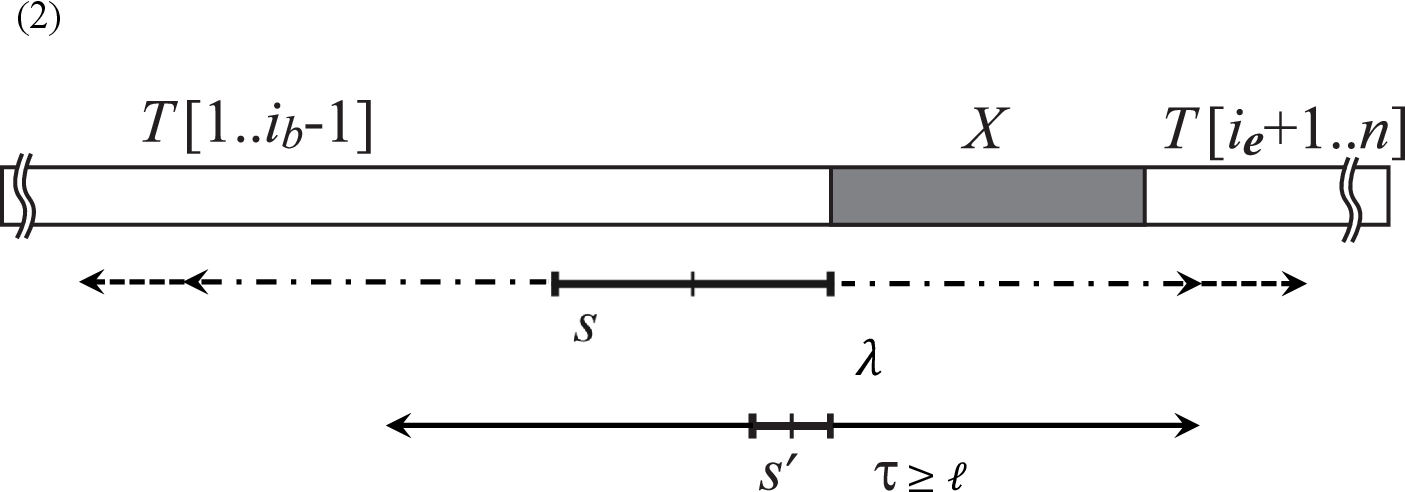}
  }
  \caption{Illustration for Lemma~\ref{lem:efficient_batched_extension},
    where solid arrows represent the matches obtained by
    na\"ive character comparisons,
    and broken arrows represent those obtained by LCE queries.
    This figure only shows the case where $s' < s$,
    but the other case where $s' > s$ can be treated similarly.
}
  \label{fig:suffix_pals_extended}
\end{figure}
  There are two cases:
  (1) If $0 < \tau < \ell$,
  then we first compute $\delta = \LeftLCE_T(i_b-s-1, i_b-s'-1)$.
  We have two sub-cases:
  (1-a) If $\delta < \tau$, then $\lambda = \delta$.
  (1-b) Otherwise ($\delta \geq \tau$), then
  we know that $\lambda$ is at least as large as $\tau$.
  We then compute the remainder of $\lambda$ by na\"ive character comparisons.
  If the character comparison reaches the end of $X$,
  then the remainder of $\lambda$ can be computed by
  $\OutLCE_T(i_b-s-\ell-1, i_e+1)$.
  Then we update $\tau$ with $\lambda$.
  (2) If $\tau \geq \ell$,
  then we can compute $\lambda$ by
  $\LeftLCE_T(i_b-s-1, i_b-s'-1)$,
  and if this value is at least $\ell$, then by
  $\OutLCE_T(i_b-s-\ell-1, i_e+1)$.
  The extensions of the following palindromes can also be computed similarly.

  The following maximal palindromes from the list after $s$ can be processed similarly.
  After processing all the $f$ maximal palindromes in the given list,
  the total number of matching character comparisons is at most $\ell$
  since each position of $X$ is involved in at most one matching character comparison.
  Also, the total number of mismatching character comparisons
  is $O(f)$ since for each given maximal palindrome
  there is at most one mismatching character comparison.
  The total number of LCE queries on the original text $T$
  is $O(f)$, each of which can be answered in $O(1)$ time.
  Thus, it takes $O(\ell + f)$ time to compute
  the length of the $f$ maximal palindromes of $T''$
  that are extended after the block edit.
\end{proof}

However, there can be $\Omega(n)$ maximal palindromes
beginning or ending at each position of a string of length $n$.
In what follows, we show how to reduce the number of maximal palindromes
that need to be considered, by using
periodic structure
of maximal palindromes.

\subsubsection{Longest Extended Palindromes from Each Group}
\label{subsec:longest_extended_palindromes_from_each_group}
Let $\langle s, d, t \rangle$ be an arithmetic progression
representing a group of maximal palindromes ending at position $i_b-1$.
For each $1 \leq j \leq t$,
we will use the convention that $s(j) = s + (j-1)d$,
namely $s(j)$ denotes the $j$th shortest element for $\langle s, d, t \rangle$.
For simplicity, let $Y = T[1..i_b-1]$ and $Z = XT[i_e+1..n]$.
Let $\Ext(s(j))$ denote the length of the maximal palindrome
that is obtained by extending $s(j)$ in $YZ$.

\begin{lemma}\label{lem:batched_extension}
  For any $\langle s, d, t \rangle \subseteq \MaxPalE_T(i_b-1)$,
  there exist palindromes $u, v$ and a non-negative integer $p$,
  such that $(uv)^{t+p-1} u$ (resp. $(uv)^p u$) is
  the longest (resp. shortest) maximal palindrome represented by $\langle s, d, t \rangle$
  with $|uv| = d$.
  Let $\alpha = \lcp(\rev{(Y[1..|Y|-s(1)])}, Z)$
  and $\beta = \lcp(\rev{(Y[1..|Y|-s(t)])}, Z)$.
  Then $\Ext(s(j)) = s(j) + 2\min \{\alpha, \beta + (t-j)d\}$.
  Further, if there exists $s(h) \in \langle s, d, t \rangle$ such that $s(h) + \alpha = s(t) + \beta$,
  then $\Ext(s(h)) = s(h) + 2\lcp(\rev{(Y[1..|Y|-s(h)])}, Z) \geq \Ext(s(j))$
  for any $j \neq h$.
\end{lemma}

\begin{figure}[tb]
  \centerline{
    \includegraphics[scale=0.55]{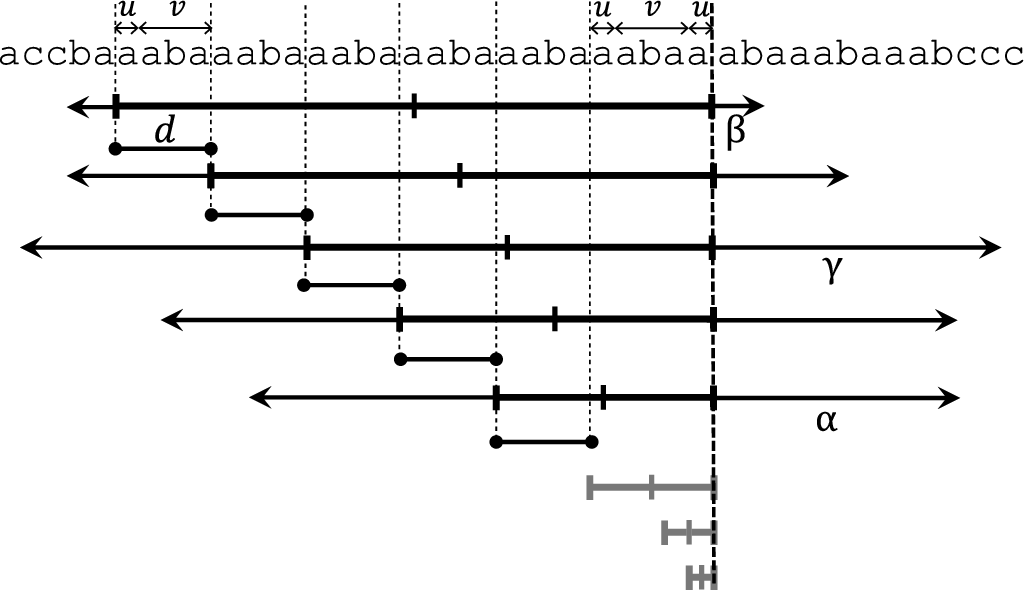}
  }
  \caption{Example for Lemma~\ref{lem:batched_extension},
    where $Y = \mathtt{accbaaabaaabaaabaaabaaabaaabaa}$ and $Z = \mathtt{abaaabaaabccc}$.
    Here $u = \mathtt{a}$ and $v = \mathtt{aba}$.
    The first five maximal palindromes
    $(uv)^pu = (\mathtt{aaba})^p\mathtt{a}$ with $2 \leq p \leq 5$
    belong to the same arithmetic progression (i.e. the same group)
    with common difference $|uv| = d = 4$.
    For this group of maximal palindromes,
    $\alpha = 10$, $\beta = 2$, and $\gamma = 12$.
    Notice that the sixth maximal palindrome $uvu = \mathtt{aabaa}$ belongs to another group
    since the length difference between it and the seventh one $\mathtt{aa}$ is $3$.
  }
    \label{fig:longest_extension_block}
\end{figure}

Then let $\gamma = \lcp(\rev{(Y[1..|Y|-s(h)])}, Z)$.
Lemma~\ref{lem:batched_extension} can be proven immediately
from Lemma 12 of~\cite{matsubara_tcs2009}.
However, for the sake of completeness
we here provide a proof.
We use the following known result:

\begin{lemma}[\cite{matsubara_tcs2009}] \label{lem:matsubara_lem15}
 For any string $Y$ and $\{s(j) \mid s(j) \in \langle s, d, t \rangle\} \subseteq \PalSuf(Y)$, there exist palindromes $u, v$ and a non-negative integer $p$, such that $(uv)^{t+p-1} u$ is a suffix of $Y$, $|uv| = d$ and $|(uv)^p u| = s$.
\end{lemma}
Now we are ready to prove Lemma~\ref{lem:batched_extension}
(see also Figure~\ref{fig:longest_extension_block}).

\begin{proof}
  Let us consider $\Ext(s(j))$, such that $s(j) \in \langle s, d, t \rangle$.
  By Lemma~\ref{lem:matsubara_lem15},
  $Y[|Y|-s(1)-(t-1)d+1..|Y|]=(uv)^{t+p-1} u$,
  where $|uv|=d$ and $|(uv)^p u| = s$.

  Let $x$ be the largest integer such that $(Y[|Y|-x+1..|Y|])^R$ has a period $|uv|$.
  Namely, $(Y[|Y|-x+1..|Y|])^R$ is the longest prefix of $Y^R$ that has a period $|uv|$.
  Then $x$ is given as
  $x = \lcp(\rev{Y}, \rev{(Y[1..|Y|-d])})+d$.
  Let $y$ be largest integer such that $(uv)^{y/d}$ is a prefix of $Z$.
  Then $y$ is given as
  $y = \min\{\lcp(\rev{Y}, Z), x\}$.

  Let $e_l=|Y|-x+1$ and $e_r=|Y|+y$.
  Then, clearly string $T''[e_l..e_r]$ has a period $d$.
  We divide $\langle s, d, t \rangle$ into three disjoint subsets as
  \[
    \langle s, d, t \rangle = \langle s, d, t_1 \rangle \cup \langle s+t_1d, d, t_2 \rangle \cup \langle s+(t_1+t_2)d, d, t_3 \rangle,
  \]
  such that\\
  $|Y|-e_l-s(j)+1 > e_r-|Y|$ for any $s(j) \in \langle s, d, t_1 \rangle$,\\
  $|Y|-e_l-s(j)+1 = e_r-|Y|$ for any $s(j) \in \langle s+t_1d, d, t_2 \rangle$,\\
  $|Y|-e_l-s(j)+1 < e_r-|Y|$ for any $s(j) \in \langle s+(t_1+t_2)d, d, t_3 \rangle$,\\
  $t_1+t_2+t_3=t$, and $t_2 \in \{0,1\}$.

  Then, for any $s(j)$ in the first sub-group $\langle s, d, t_1 \rangle$,
  $\Ext(s(j))=s(j)+2(e_r-|Y|)=s(j)+2y$.
  Also, for any $s(j)$ in the third sub-group $\langle s+(t_1+t_2)d, d, t_3 \rangle$,
  $\Ext(s(j))=s(j)+2(|Y|-e_l-s(j)+1)=s(j)+2(x-s(j))$.
  Now let us consider $s(j) \in \langle a_2, d, t_2 \rangle$,
  in which case $s(j) = s(h)$ (see the statement of Lemma~\ref{lem:batched_extension}).
  Note that $0 \leq t_2 \leq 1$,
  and here we consider the interesting case where $t_2 = 1$.
  Since the palindrome $s(h)$ can be extended beyond the periodicity w.r.t. $uv$,
  we have $\Ext(s(h))=s(h)+2 \gamma$, where $\gamma = \lcp(\rev{(Y[1..|Y|-s(h)])}, Z)$.

  Additionally, we have that
  $y = \lcp(\rev{Y},Z) = \lcp(\rev{(Y[1..|Y|-s(1)])},Z)=\alpha$
  where the second equality comes from the periodicity
  w.r.t. $uv$, and that
  $x-s(j) = \lcp(\rev{(Y[1..|Y|-s(t)])},Z)+(t-j)d=\beta+(t-j)d$.
  Therefore, for any $s(j) \in \langle s, d, t \rangle$,
  $\Ext(s(j))$ can be represented as follows:
  \[
    \Ext(s(j)) = \left\{ \begin{array}{ll}
      s(j)+2\alpha & (\alpha < \beta + (t-j)d ) \\
      s(j)+2 (\beta + (t-j)d) & (\alpha > \beta + (t-j)d )\\
      s(j)+2\gamma & (\alpha = \beta + (t-j)d )
    \end{array} \right.
    \]
  This completes the proof.
\end{proof}

It follows from Lemma~\ref{lem:batched_extension} that
it suffices to consider only three maximal palindromes from each group
(i.e. each arithmetic progression).
Then using Lemma~\ref{lem:efficient_batched_extension},
one can compute the longest maximal palindrome that gets extended
in $O(\ell + \log n)$ time.

\subsubsection{Relationship of Groups}
\label{subsec:relationship_of_groups}
To further speed up computation, we take deeper insights into
combinatorial properties of maximal palindromes in $\MaxPalE_T(i_b-1)$.
Let $G_1, \ldots, G_m$ be the list of all groups for the maximal palindromes
from $\MaxPalE_T(i_b-1)$,
which are
\emph{sorted in increasing order of their common difference}.
Namely, the $j$th shortest member of $\MaxPalE_T(i_b-1)$ belongs to $G_r = \langle s_r, d_r, t_r \rangle$ with $2 \leq r \leq m$, iff the difference between the $j$th shortest maximal palindrome and the $(j-1)$th one is equal to $d_r$.
Then $d_r$ with $2 \leq r \leq m$ is corresponding to the period of any maximal palindrome in $G_r$.
Regardless of whether $\varepsilon$ belongs to $\MaxPalE_T(i_b-1)$ or not, we define that $G_1$ is a singleton, $d_1=0$, and $\varepsilon$ is the element of $G_1$.
When $m = O(\log \log n)$,
$O(\ell+\log\log n)$-time queries immediately follow from Lemmas~\ref{lem:efficient_batched_extension} and~\ref{lem:batched_extension}.
In what follows we consider the more difficult case where
$m = \omega(\log \log n)$. Recall also that $m = O(\log n)$ always holds.

For each $G_r = \langle s_r, d_r, t_r \rangle$ with $1 \leq r \leq m$,
let $\alpha_r$, $\beta_r$, $\gamma_r$, $u_r$, and $v_r$
be the corresponding variables used in Lemma~\ref{lem:batched_extension}.
If there is only a single element in $G_r$, let $\beta_r$ be the length of extension of the palindrome and $\alpha_{r}=\beta_{r-1}$.
For each $G_r$,
let $S_r$ (resp. $L_r$) denote the shortest (resp. longest)
maximal palindrome in $G_r$,
namely, $|S_r| = s_r(1)$ and $|L_r| = s_r(t_r)$.

Each group $G_r$ is said to be of \emph{type-1} (resp. \emph{type-2})
if $\alpha_r < d_r$ (resp. $\alpha_r \geq d_r$).

Let $k$~($1 \leq k \leq m$) be the unique integer such that
$G_k$ is the type-2 group where $d_k$ is the largest common difference
among all the type-2 groups.
Additionally, let $G'_k = G_k \cup \{u_kv_ku_k, u_k\}$.
Note that $u_k$ belongs to one of $G_1, \ldots, G_{k-1}$,
and $u_k v_k u_k$ belongs to either $G_k$ or one of $G_1, \ldots, G_{k-1}$.
In the special case where $\alpha_k=\beta_k+td_k$,
the extensions of $u_k$ and $u_kv_ku_k$ can be
longer than the extension of the shortest maximal palindrome in $G_k$
(see Figure~\ref{fig:G_prime_k}
for a concrete example).
Thus, it is convenient for us to treat $G'_k = G_k \cup \{u_kv_ku_k, u_k\}$
as if it is a single group.
We also remark that this set $G'_k$ is defined only for this specific type-2 group $G_k$.

\begin{figure}[h]
  \centerline{
    \includegraphics[scale=0.50]{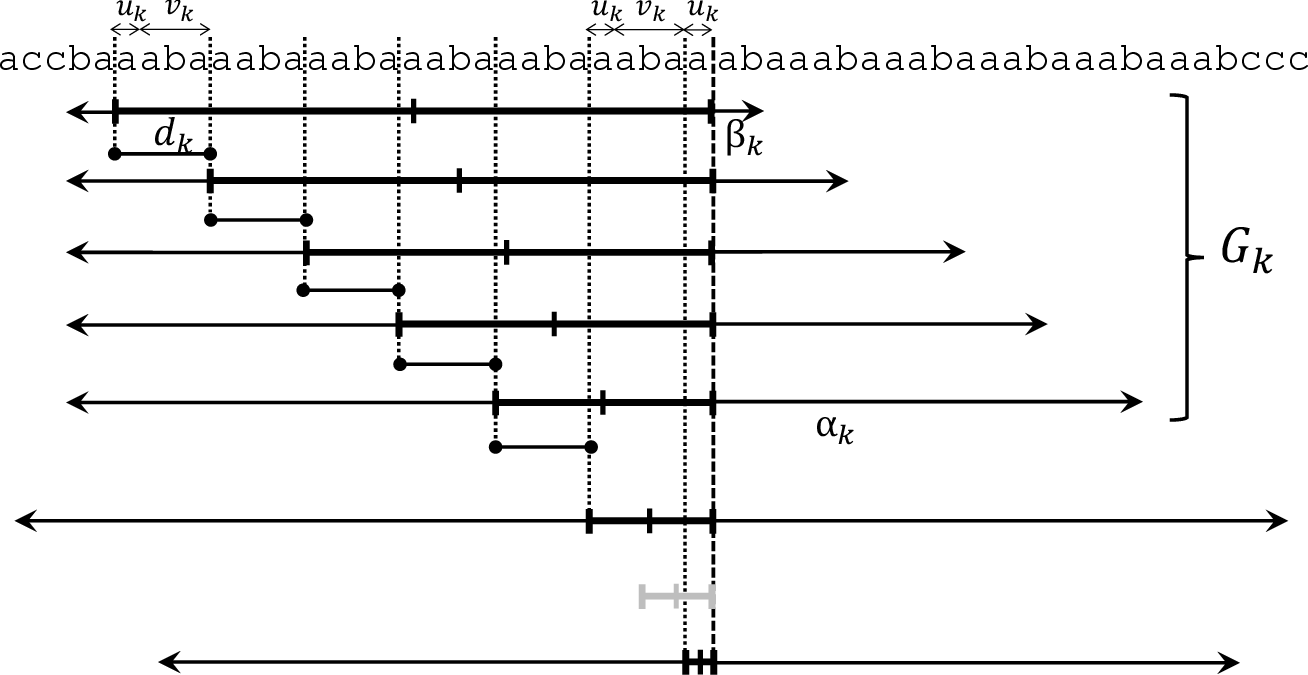}
  }
  \caption{Example for $G'_k = G_k \cup \{u_k v_k u_k, u_k\}$,
  where the extensions of $u_k v_k u_k$ and $u_k$ are longer than
  the extensions of any maximal palindromes in $G_k$.}
    \label{fig:G_prime_k}
\end{figure}

\begin{lemma}\label{lem:group_relationship}
  There is a longest palindromic substring in the edited string $T''$
  that is obtained by extending the maximal palindromes in $G_m$, $G_{m-1}$, or $G'_k$.
\end{lemma}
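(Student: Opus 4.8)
The plan is to reduce the whole statement to comparing one scalar quantity across the groups. For each group $G_r$ write $M_r = \max_{s_j \in G_r} \Ext(s_j)$ for the maximum extended length produced inside $G_r$, and let $M'_k$ be the analogous maximum over the enlarged group $G'_k$ obtained by feeding $G'_k$ into Lemma~\ref{lem:batched_extension}. Since $G_m$, $G_{m-1}$ and $G'_k$ are themselves (enlarged) groups, it suffices to prove $\max_{1\le r\le m} M_r \le \max\{M_m, M_{m-1}, M'_k\}$; the reverse inequality is immediate. The first step is to read off the shape of $j \mapsto \Ext(s_j)$ inside one group from Lemma~\ref{lem:batched_extension}: writing $w=(t_r-j)d_r\ge 0$ in $\Ext(s_j)=s_j+2\min\{\alpha_r,\beta_r+(t_r-j)d_r\}$, the length equals $|L_r|+2\beta_r+w$ while $\beta_r+w\le\alpha_r$ and equals $|L_r|+2\alpha_r-w$ once $\beta_r+w\ge\alpha_r$. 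Consistency at the endpoints forces $\alpha_r\ge\beta_r$ (shorter palindromes of a group extend at least as far as the longer ones), so the function is unimodal with its peak at $w=\alpha_r-\beta_r$, i.e.\ at the special element $s_h$ when it exists. Hence $M_r=|L_r|+\beta_r-\alpha_r+2\gamma_r\ge |L_r|+\alpha_r+\beta_r$ when $s_h$ exists, and $M_r<|L_r|+\alpha_r+\beta_r$ otherwise.

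Next I would dispose of the \emph{type-1} groups. If $G_r$ is type-1 then $\alpha_r<d_r$, and no $\gamma$-boost can occur: the peak lands on a grid point only when $d_r \mid (\alpha_r-\beta_r)$, and $0\le\alpha_r-\beta_r<d_r$ then forces $\alpha_r=\beta_r$ with $s_h=L_r$ and $\gamma_r=\beta_r$. Using $\beta_r\le\alpha_r<d_r$ this gives $M_r<|L_r|+2d_r$. I would then invoke the monotonicity of the common differences (Lemma~\ref{lem:maximal_palindromes}(i)): the gap $|S_{r+1}|-|L_r|$ and each subsequent group-boundary gap is at least $d_r$, so for $r\le m-2$ we get $|L_m|\ge|L_{r+2}|\ge|L_r|+2d_r>M_r$. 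Since $M_m\ge|L_m|$, every type-1 group with $r\le m-2$ is strictly dominated by $G_m$, and the only type-1 groups that need to be kept are $G_{m-1}$ and $G_m$ — precisely the two retained in the statement.

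The remaining and hardest part is the \emph{type-2} groups. Because $G_k$ has the largest common difference among type-2 groups and common difference grows with length, every type-2 group has index at most $k$, and I must show each of their peaks is dominated by $M'_k$. The intended mechanism is that $G'_k$ is exactly the maximal period-$d_k$ progression $\{(u_kv_k)^j u_k\}_{j\ge 0}$ of palindromes ending at $i_b-1$, extended down to its primitive head $u_k$; the two extra elements $u_kv_ku_k$ and $u_k$ are the bridging palindromes at which the $\gamma$-boosts of the smaller-period type-2 groups are realized, and the defining condition $\alpha_r\ge d_r$ guarantees that the extension of any smaller type-2 group runs through a full period and so continues into this progression. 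I would therefore argue that applying Lemma~\ref{lem:batched_extension} to $G'_k$ recovers an extension at least as long as $M_r$ for every type-2 $G_r$, and that the special subcase $\alpha_k=\beta_k+t\,d_k$, in which $u_k$ or $u_kv_ku_k$ out-extends the shortest element of $G_k$, is handled exactly by the inclusion of those two elements.

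Establishing this last nesting claim — that the extensions of all smaller-period type-2 groups are captured by the single enlarged progression $G'_k$ — is where essentially all the combinatorial work lies, and is the step I expect to be the principal obstacle. It is the one place where the argument must genuinely exploit the Fibonacci-like growth (Lemma~\ref{lem:maximal_palindromes}(ii)) together with the fact that a type-2 extension wraps an entire period, and it is what lets us replace the $\Theta(\log n)$ candidate groups of the previous $O(\ell+\log n)$ algorithm by the three groups $G_m$, $G_{m-1}$, $G'_k$, each of which is then handled in $O(\ell+1)$ time by Lemma~\ref{lem:efficient_batched_extension}.
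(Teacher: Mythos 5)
Your overall skeleton matches the paper's proof, which splits the lemma into two claims: (i) the type-1 groups other than $G_{m-1},G_m$ are dominated, and (ii) everything in $G_1,\ldots,G_{k-1}$ except $u_kv_ku_k$ and $u_k$ is dominated by the shortest member of $G_k$. Your disposal of the type-1 groups is essentially correct for groups with at least two elements, and it is in fact a more direct route than the paper's: the paper does not use $\beta_r\le\alpha_r<d_r$ at all, but instead proves a sub-lemma $\beta_r=\alpha_{r+1}$ (because $L_r$ is one of the bridging palindromes $u_{r+1}v_{r+1}u_{r+1}$ or $u_{r+1}$ of $G_{r+1}$, so both extensions are governed by $\lcp(\rev{(u_{r+1}v_{r+1})},Z)<d_{r+1}$), and then telescopes the resulting bounds up to $s_m(1)$. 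One caveat for your version: for singleton groups the paper defines $\alpha_r=\beta_{r-1}$, under which your inequality $\beta_r\le\alpha_r$ (and hence $M_r<|L_r|+2d_r$) is not automatic; capping the extension of such a lone palindrome is exactly what the paper's sub-lemma accomplishes by exploiting the periodic structure of the \emph{next} group, so your route would need a patch there.

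The genuine gap is the type-2 part, which you explicitly defer as ``the principal obstacle'': you never prove that $G'_k$ captures the extensions of the smaller-difference groups, and this is the heart of the lemma, not a technicality that follows from ``the extension wraps a full period.'' The paper proves it by a short border/periodicity contradiction. Suppose some $Q$ from $G_1,\ldots,G_{k-1}$ with $Q\notin\{u_kv_ku_k,u_k\}$ were extended by at least $d_k$ on each side (say $S_k=(u_kv_k)^2u_k$; the case $S_k=u_kv_ku_k$ is symmetric). Then an internal occurrence of $u_kv_k$ is forced, inside the prefix $(u_kv_k)^2$ when $d_k<|Q|<|u_kv_ku_k|$, and inside $u_kv_ku_k$ otherwise; this occurrence yields a string $(u_kv_k)^2w$, with $w$ a proper nonempty prefix of $u_k$, that is simultaneously a proper prefix and a proper suffix of $(u_kv_k)^2u_k$. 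By Lemma~\ref{lem:pal_border} this border is a palindrome, i.e., a palindromic suffix of $T[1..i_b-1]$ whose length lies strictly between $|u_kv_ku_k|$ and $|(u_kv_k)^2u_k|$ --- contradicting that these are consecutive members of the period-$d_k$ progression (Lemma~\ref{lem:maximal_palindromes}). Hence every such $Q$ extends by strictly less than $d_k\le\alpha_k$, and since $|Q|<|S_k|$ we get $\Ext(Q)<|S_k|+2\alpha_k=\Ext(S_k)\le M'_k$. Note also that this argument covers \emph{all} of $G_1,\ldots,G_{k-1}$, not only the type-2 groups, which is what relieves the paper's Claim (2) from handling singletons below $G_k$. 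Without this step your proposal establishes only the easy half of the lemma, so as it stands it is an outline whose central combinatorial argument is missing.
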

\begin{proof}
  The lemma holds if the two following claims are true:
  \begin{description}
  \item[Claim (1):]
      The extensions of the maximal palindromes in $G_1, \ldots, G_{k-1}$,
      except for $u_kv_ku_k$ and $u_k$,
      cannot be longer than the extension of the shortest maximal palindrome in $G_k$.

    \item[Claim (2)] Suppose both $G_m$ and $G_{m-1}$ are of type-1. Then, the extensions of the maximal palindromes from $G_{k+1}, \ldots, G_{m-2}$, which are also of type-1, cannot be longer than the extensions of the maximal palindromes from $G_m$ or $G_{m-1}$.
  \end{description}

  \noindent \textbf{Proof for Claim (1).}
  Here we consider the case where the maximal palindrome
  $u_k v_k u_k$ does not belong to $G_k$,
  which implies that the shortest maximal palindrome $S_k$ in $G_k$ is $(u_kv_k)^2u_k$
  (The other case where $u_k v_k u_k$ belongs to $G_k$ can be treated similarly).
  Now, $u_k v_k u_k$ belongs to one of $G_1, \ldots, G_{k-1}$.
  Consider the prefix $P = T[1..i_b-|u_k v_k u_k|-1]$ of $T$
  that immediately precedes $u_k v_k u_k$.
  The extension of $u_k v_k u_k$ is obtained by $\lcp(\rev{P}, Z)$.
  Consider the prefix
  $P' = T[1..i_b-|(u_kv_k)^2u_k|-1]$ of $T$ that immediately precedes $(u_kv_k)^2u_k$.
  It is clear that $P$ is a concatenation of $P'$ and $u_k v_k$.
  Similarly, the prefix $T[1..i_b-|u_k|-1]$ of $T$ that immediately precedes $u_k$ is a concatenation of $P$ and $u_k v_k$.
  It suffices for us to consider only the three maximal palindromes from $G'_k$.
  For any other maximal palindrome $Q$ from $G_1, \ldots, G_{k-1}$,
  assume on the contrary that $Q$ gets extended by at least $d_k$
  to the left and to the right.
  If $|u_k v_k| = d_k < |Q| < |u_k v_k u_k|$,
  then there is an internal occurrence of $u_k v_k$ inside the prefix $(u_k v_k)^2$
  of $(u_k v_k)^2 u_k$.
  Otherwise ($|u_k| < |Q| < |u_k v_k|=d_k$ or $|Q| < |u_k|$),
  there is an internal occurrence of $u_k v_k$ inside $u_k v_k u_k$.
  Here we only consider the first case but other cases can be treated similarly.
  See also Figure~\ref{fig:proof_of_lemma_6-1}.
  This internal occurrence of $u_k v_k$ is immediately followed by
  $u_k v_k w$, where $w$ is a proper prefix of $u_k$ with $1 \leq |w| < |u_k|$.
  Namely, $(u_k v_k)^2w$ is a proper suffix of $(u_k v_k)^2 u_k$.
  On the other hand, $(u_k v_k)^2w$ is also a proper prefix of $(u_k v_k)^2 u_k$.
  Since
  $(u_k v_k)^2 u_k$
  is a palindrome,
  it now follows from Lemma~\ref{lem:pal_border} that $(u_k v_k)^2w$
  is also a palindrome.
  Since $1 \leq |w| < |u_k|$,
  we have $|(u_k v_k)^2w| > |u_kv_ku_k|$
  (note that this inequality holds also when $v_k$ is the empty string).
  Then, $(u_k v_k)^2 w$ is also immediately preceded by $u_k v_k$ because of periodicity and is extended by at least $d_k$
  to the left and to the right
  because $G_k$ is of type-2.
  Since $T''[i_b]=T[i_b-|(u_k v_k)^2 w|-1]$ and $T''[i_b] \neq T[i_b]$, $(u_k v_k)^2 w$ is a maximal palindrome.
  However this contradicts that $(u_k v_k)^2u_k$ belongs to $G_k$ with common difference $d_k = |u_k v_k|$.
  Thus $Q$ cannot be extended by $d_k$ nor more to the left and to the right.
  Since $G_k$ is of type-2, $\alpha_k \geq d_k$.
  Since $|Q| < |(u_k v_k)^2u_k|$,
  the extension of $Q$ cannot be longer than the extension for $(u_k v_k)^2u_k$.
  This completes the proof for Claim (1).

    \begin{figure}[tb]
    \centerline{
      \includegraphics[scale=0.3]{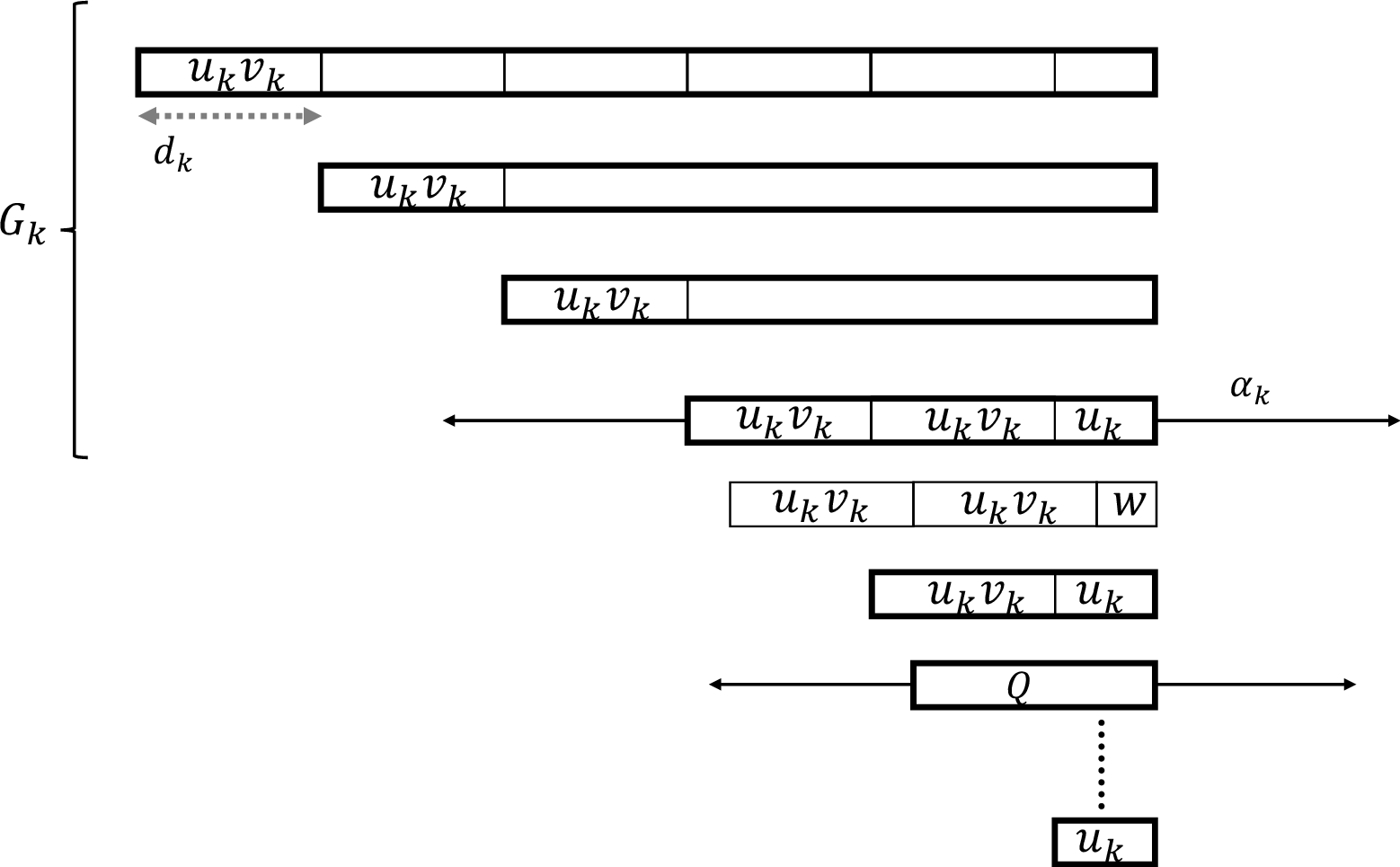}
    }
    \caption{Illustration for the proof for Claim (1) of Lemma ~\ref{lem:group_relationship}.}
    \label{fig:proof_of_lemma_6-1}
  \end{figure}

  \noindent \textbf{Proof for Claim (2).}
  Consider each group $G_r = \langle s_r, d_r, t_r \rangle$ with $k+1 \leq r \leq m-2$.
  By Lemma~\ref{lem:batched_extension},
    $s_r(t_r)+2\beta_r$ and $s_r(t_r-1)+2\alpha_r$ are the candidates
    for the longest extensions of the maximal palindromes from $G_r$.
    Recall that both $G_{m-1}$ and $G_m$ are of type-1,
    and that if $G_r$ is of type-1 then $G_{r+1}$ is also of type-1.
    Now the following sub-claim holds:
  \begin{lemma}
    $\beta_r = \alpha_{r+1}$ for any $k+1 \leq r \leq m-2$.
  \end{lemma}
  \begin{proof}
    If $G_{r+1}$ is a singleton,
    then by definition $\beta_r = \alpha_{r+1}$ holds.
    Now suppose $|G_{r+1}| \geq 2$.
    Since the shortest maximal palindrome $S_{r+1}$ from $G_{r+1}$
    is either $(u_{r+1} v_{r+1})^2 u_{r+1}$ or $u_{r+1} v_{r+1} u_{r+1}$,
    the longest maximal palindrome $L_{r}$ from $G_r$
    is either $u_{r+1} v_{r+1} u_{r+1}$ or $u_{r+1}$.
    The prefix $T[1..i_b-|L_{r}|-1]$ of $T$
    that immediately precedes $L_r$ contains $u_{r+1}v_{r+1}$ as a suffix,
    which alternatively means $\rev{(u_{r+1}v_{r+1})}$ is a prefix of $\rev{(T[1..i_b-|L_r|-1])}$.
    Moreover, it is clear that the prefix $T[1..i_b-|S_{r+1}|-1]$ of $T$
    that immediately precedes $S_{r+1}$ contains $u_{r+1}v_{r+1}$ as a suffix since $|G_{r+1}| \geq 2$.
    In addition, $\alpha_{r+1} < d_{r+1}=|u_{r+1}v_{r+1}|$
    since $G_{r+1}$ is of type-1.
    From the above arguments, we get $\beta_r = \alpha_{r+1}$.
  \end{proof}
  Since $\beta_r=\alpha_{r+1}$ and $\alpha_{r+1}<d_{r+1}$, we have $s_r(t_r)+2\beta_r <s_r(t_r)+2d_{r+1}$.
  In addition, $s_r(t_r-1)+2\alpha_r<s_r(t_r-1)+2d_r=s_r(t_r)+d_r$.
  It now follows from $d_r<d_{r+1}$ that $s_r(t_r)+d_r<s_r(t_r)+2d_{r+1}$.
  Since the lengths of the maximal palindromes and
  their common differences are
  arranged in increasing order in the groups $G_{k+1}, \ldots. G_{m-2}$,
  we have that the longest extension from $G_{k+1}, \ldots. G_{m-2}$ is shorter than
  $s_{m-2}(t_{m-2})+2d_{m-1}$. Since $d_{m-1} < d_{m}$, we have
  \[
  s_{m-2}(t_{m-2}) + 2d_{m-1} < s_{m-2}(t_{m-2}) + d_{m-1} + d_m \leq s_{m-1}(t_{m-1}) + d_{m} \leq s_m = s_m(1).
  \]
  This means that the longest extended maximal palindrome from
  the type-1 groups $G_{k+1}$, $\ldots$, $G_{m-2}$ cannot be longer than
  the original length of the maximal palindrome from $G_m$ before the extension.
  This completes the proof for Claim (2).
\end{proof}

It follows from Lemmas~\ref{lem:efficient_batched_extension}, \ref{lem:batched_extension} and~\ref{lem:group_relationship} that given $G_k$, we can compute in $O(\ell)$ time the length of the LPS of $T''$ after the block edit.
What remains is how to quickly find $G_k$,
that has the largest common difference among all the type-2 groups.
Note that a simple linear search from $G_m$ or $G_1$ takes $O(\log n)$ time,
which is prohibited when $\ell = o(\log n)$.
In what follows, we show how to find $G_k$ in $O(\ell + \log \log n)$ time.

\subsubsection{How to Find $G_k$}
Recall that $T[1..i_b-|L_{r-1}|-1]$ which
immediately precedes $S_r$ contains $u_rv_r$ as a suffix.
Thus, $\rev{(u_rv_r)}$ is a prefix of $\rev{(T[1..i_b-|L_{r-1}|-1])}$.
We have the following observation.

\begin{observation}\label{obs:relationship_between_longestLCE_and_G_k}
  Let $W_1 = \rev{(T[1..i_b-1])}$,
  and $W_r = \rev{(T[1..i_b-|L_{r-1}|-1])}$ for $2 \leq r \leq m$.
  Let $W$ be the string such that $\lcp(W_r,Z)$ is the largest
  for all $1 \leq r \leq m$ (i.e. for all groups $G_1, \ldots, G_m$),
  namely, $W = \argmax_{1 \leq r \leq m} \lcp(W_r,Z)$.
  Then $G_k = G_x$ such that
    \begin{enumerate}
      \item[(a)] $\rev{(u_x v_x)}$ is a prefix of $W$,
      \item[(b)] $d_x \leq \lcp(W,Z)$, and
      \item[(c)] $d_x$ is the largest among all groups that satisfy Conditions (a) and (b).
    \end{enumerate}
\end{observation}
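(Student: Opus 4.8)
The plan is to reduce the whole statement to the single identity
$\lcp(W_r, Z) = \beta_{r-1}$, i.e. that $\lcp(W_r,Z)$ equals the one-sided extension of the longest maximal palindrome $L_{r-1}$ of the previous group $G_{r-1}$ (with the convention $\beta_0 = \lcp(\rev{Y}, Z)$ for $r = 1$). This is immediate from the definitions: since $|L_{r-1}| = s_{r-1}(t_{r-1})$ and the prefix of $T$ ending just before $L_{r-1}$ is $T[1..i_b-|L_{r-1}|-1]$, we have $W_r = \rev{(T[1..i_b-|L_{r-1}|-1])} = \rev{(Y[1..|Y|-|L_{r-1}|])}$, whose $\lcp$ with $Z$ is by definition $\beta_{r-1}$. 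Hence $\lcp(W, Z) = \max_{1 \le r \le m} \beta_{r-1}$. First I would record two facts that are already available: $\rev{(u_r v_r)}$ is always a prefix of $W_r$ (observed immediately before the statement), and for any type-1 group $G_x$ we have $\beta_{x-1} = \alpha_x < d_x$, where the equality is the sub-lemma proved inside Lemma~\ref{lem:group_relationship} (which needs only that $G_x$ be of type-1 or a singleton) and the strict inequality is the definition of type-1. I would also use that, since $G_k$ has the largest common difference among all type-2 groups and the groups are sorted by strictly increasing $d_r$, every group $G_x$ with $d_x > d_k$ is of type-1.

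Next I would show that $G_k$ satisfies Conditions (a) and (b), the crux being the identity $Z[1..d_k] = \rev{(u_k v_k)}$. Since $G_k$ is of type-2, $\alpha_k \ge d_k$, so the shortest palindrome $S_k = (u_k v_k)^p u_k$ extends rightward into $Z$ by at least $d_k$; because the left side of this extension lies in the unchanged prefix $Y$, the outward match gives $Z[1..d_k] = \rev{(T[i_b-s_k-d_k..i_b-s_k-1])}$, the reverse of the $d_k$ characters of $T$ immediately to the left of $S_k$. Using that the next-longer element $(u_k v_k)^{p+1}u_k$ of $G_k$ is itself a maximal palindrome of $T$ ending at $i_b-1$, those left characters are exactly $u_k v_k$, whence $Z[1..d_k] = \rev{(u_k v_k)}$. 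Combined with the fact that $\rev{(u_k v_k)}$ is a prefix of $W_k$, this yields $\lcp(W_k, Z) = \beta_{k-1} \ge d_k$, so $\lcp(W,Z) \ge \beta_{k-1} \ge d_k$, which is Condition (b); and as $W$ and $Z$ then agree on their first $d_k$ characters, $W[1..d_k] = Z[1..d_k] = \rev{(u_k v_k)}$, which is Condition (a).

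Finally I would rule out every group with a larger common difference. Fix $G_x$ with $d_x > d_k$, necessarily of type-1, and suppose for contradiction that it satisfies both (a) and (b). By (b), $W$ and $Z$ share a prefix of length at least $d_x$, so $W[1..d_x] = Z[1..d_x]$; combined with (a) this gives $Z[1..d_x] = \rev{(u_x v_x)}$. But $\rev{(u_x v_x)}$ is also a prefix of $W_x$, so $Z$ and $W_x$ agree on their first $d_x$ characters, i.e. $\beta_{x-1} = \lcp(W_x, Z) \ge d_x$, contradicting $\beta_{x-1} = \alpha_x < d_x$ (which holds because $G_x$ is of type-1). Hence no group with common difference exceeding $d_k$ satisfies both conditions. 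Together with the previous paragraph and the fact that the common differences are pairwise distinct, the group of largest common difference satisfying (a) and (b) is exactly $G_k$, so $G_k = G_x$ as claimed.

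The step I expect to be the main obstacle is the identity $Z[1..d_k] = \rev{(u_k v_k)}$. In the generic case $t_k \ge 2$ it follows cleanly from the existence of the next-longer group element as above, but the degenerate situations where the shortest element is $u_k$ itself (that is, $p = 0$) or where $G_k$ is a singleton fall outside this argument; these are precisely the cases motivating the auxiliary set $G'_k = G_k \cup \{u_k v_k u_k, u_k\}$, and they require a separate, though routine, periodicity check (extending the palindrome $S_k$ by at least one full period $d_k$ preserves the period). Everything else reduces to bookkeeping with the identity $\lcp(W_r, Z) = \beta_{r-1}$ and the already-established relation $\beta_{x-1} = \alpha_x$ for type-1 groups.
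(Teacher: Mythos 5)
Your proof is correct, but note that the paper offers no proof of this statement at all---it is presented as an Observation whose justification is left implicit---so what you have really done is reconstruct that missing justification, and you do it from exactly the ingredients the paper develops around the statement: the identity $\lcp(W_r,Z)=\beta_{r-1}$, the remark immediately preceding the observation that $\rev{(u_rv_r)}$ is a prefix of $W_r$, the group structure from Lemma~\ref{lem:batched_extension}, and the sub-lemma $\beta_r=\alpha_{r+1}$ inside the proof of Lemma~\ref{lem:group_relationship} (which, as you correctly point out, uses only that the group in question is type-1 or a singleton, not the range restriction $k+1\le r\le m-2$). Your three steps---establishing $Z[1..d_k]=\rev{(u_kv_k)}$ so that $G_k$ satisfies (a) and (b); showing that any group with $d_x>d_k$ is necessarily type-1, whence (a) and (b) together would force $\beta_{x-1}\ge d_x$ against $\beta_{x-1}=\alpha_x<d_x$; and invoking distinctness of the common differences to conclude via (c)---are sound and are precisely why the observation holds. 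One remark: the ``degenerate'' cases you defer at the end are in fact already covered by what you wrote. If $t_k\ge 2$ and $p=0$, the next-longer group element $u_kv_ku_k$ still exists and its length-$d_k$ prefix is $u_kv_k$, so your generic argument applies verbatim; and if $G_k$ is a singleton, the paper's convention $\alpha_k=\beta_{k-1}$ makes ``type-2'' mean $\lcp(W_k,Z)\ge d_k$ directly, after which (a) and (b) follow from the prefix remark with no periodicity check needed. So your proof is complete as written, modulo deleting that hedge.
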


Due to Observation~\ref{obs:relationship_between_longestLCE_and_G_k},
the first task is to find $W$.

\begin{lemma} \label{lem:find_W}
  $W$ can be found in $O(\ell + \log \log n)$ time
  after $O(n)$-time and space preprocessing.
\end{lemma}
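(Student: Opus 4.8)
The plan is to reduce the computation of $W=\argmax_{1\le r\le m}\lcp(W_r,Z)$ to a lexicographic neighbour search among the $m=O(\log n)$ candidate suffixes $W_r=\rev{(T[1..p_r])}$ of $\rev{T}$, where $p_1=i_b-1$ and $p_r=i_b-|L_{r-1}|-1$, against the query string $Z=XT[i_e+1..n]$. In preprocessing I would build the suffix tree/array of $\rev{T}$ and of $T\$\rev{T}\#$, each enhanced with an LCA structure so that $\OutLCE_T$, $\LeftLCE_T$ and $\RightLCE_T$ queries, as well as LCE queries between two suffixes of $\rev{T}$, all cost $O(1)$; for an integer alphabet this takes $O(n)$ time and space. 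The combinatorial fact I would rely on is the standard one that, for a fixed string $q$ and a set $S$ of suffixes sorted lexicographically, $\max_{w\in S}\lcp(q,w)$ is attained by the lexicographic predecessor or successor of $q$ in $S$. Thus the argmax reduces to a neighbour search, and each $\lcp$ value is recovered by a single $O(1)$ LCE query.

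First I would spend $O(\ell)$ time scanning $X$ against $\rev{T}$: by walking $X$ from the root of the suffix tree of $\rev{T}$ with a fast integer-alphabet child navigation, I obtain the suffix-array interval $[lo,hi]$ of suffixes of $\rev{T}$ that begin with $X$, together with the depth to which $X$ matches when it does not fully occur. This splits the candidates into two cases. In the first case, where some $W_r$ has $X$ as a prefix (equivalently, its suffix rank lies in $[lo,hi]$), we have $\lcp(W_r,Z)=\ell+\OutLCE_T(p_r-\ell,\,i_e+1)\ge\ell$, so the winner is the candidate whose shifted suffix $\rev{(T[1..p_r-\ell])}$ is the lexicographic neighbour of $T[i_e+1..n]$; comparing $T[i_e+1..n]$ against a candidate costs $O(1)$, being one outward LCE followed by one character comparison. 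In the second case, where no candidate begins with $X$, every $\lcp(W_r,Z)=\lcp(W_r,X)<\ell$, so the winner has rank immediately below $lo$ or immediately above $hi$, and the two corresponding $\lcp$ values are read off from the $X$-walk in $O(1)$. In either case $W$ is one of at most two explicitly identified candidates, so once the neighbour is located the computation finishes in $O(1)$.

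The remaining and hardest difficulty is to perform the neighbour search among the candidate suffixes—locating, within the rank interval $[lo,hi]$, the suffix-rank predecessor and successor of the query suffix—in $O(\log\log n)$ time and within the global $O(n)$ space budget, even though the candidate set $\{W_r\}$ depends on the query position $i_b$. Since $m=O(\log n)$, a binary search over the candidates sorted by suffix rank already yields $O(\log m)=O(\log\log n)$ comparisons, each $O(1)$ by the LCE argument above; the crux is therefore to make the $i_b$-specific sorted order of candidate ranks available for binary search without storing it explicitly for every position, which would cost $\omega(n)$ space. My plan is to exploit the nested, periodic structure $W_1\supset W_2\supset\cdots\supset W_m$ guaranteed by Lemma~\ref{lem:maximal_palindromes} and Lemma~\ref{lem:batched_extension}—in particular the increasing prefixes $\rev{(u_rv_r)}$ of lengths $d_1\le d_2\le\cdots\le d_m$—so that candidate ranks can be generated on demand and searched through a single global predecessor structure (a $y$-fast trie over the suffix ranks of $\rev{T}$) rather than through per-position tables. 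I expect precisely this step—bridging the query-dependent candidate set to a linear-space, $O(\log\log n)$-time predecessor query—to be the technical heart of the lemma, while the reduction to a neighbour search and the $O(\ell)$ processing of $X$ are routine once the suffix structures are in place.
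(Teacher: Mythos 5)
Your overall reduction is the same as the paper's: view the $m=O(\log n)$ candidates $W_r$ as suffixes of $\rev{T}$, spend $O(\ell)$ work on the prefix $X$ of $Z$, handle the tail $T[i_e+1..n]$ with $O(1)$-time outward LCE queries, and locate the maximizer of $\lcp(W_r,Z)$ by a lexicographic neighbour/binary search with $O(\log\log n)$ comparisons. However, your proof has a genuine gap exactly where you yourself flag one: you never provide the mechanism for searching the \emph{query-dependent} candidate set $\{W_1,\ldots,W_m\}$ in $O(\log\log n)$ time within $O(n)$ space; you only ``expect'' that a $y$-fast trie over the suffix ranks of $\rev{T}$ plus on-demand generation of candidate ranks will do it. As stated, this does not work: a single global predecessor structure over \emph{all} suffix ranks answers predecessor queries with respect to all suffixes of $\rev{T}$, not with respect to the $O(\log n)$ candidates determined by $i_b$, and restricting a predecessor query to an arbitrary query-dependent subset is precisely the difficulty. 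Generating candidate ranks on demand does not help either: to binary search them by rank you need them already in sorted order, which is the very information you claimed you could not afford to store.

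The obstacle you are engineering around is in fact illusory, and the paper's solution is the direct one you rejected: for every position $i$, explicitly store the sparse suffix array $\mathcal{A}_i$ of the candidates $\{W_1,\ldots,W_m\}$ together with its LCP array $\mathcal{L}_i$. Your space estimate ($\omega(n)$ for per-position tables) is wrong, because each group of $\MaxPalE_T(i)$ contains at least one maximal palindrome ending at position $i$, and the total number of maximal palindromes over all positions is exactly $2n-1$; hence $\sum_i m_i = O(n)$, so all the arrays $\mathcal{A}_i$ and $\mathcal{L}_i$ together occupy $O(n)$ space and can be built in $O(n)$ total time from the full suffix array and LCP array of $T$ (enhanced with RMQ). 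With these per-position arrays, the query is a Manber--Myers-style binary search: first for $X$, costing $O(\ell+\log\log n)$ time since $|\mathcal{A}_{i_b-1}|=O(\log n)$, and then, if all $\ell$ characters of $X$ match, a second binary search within the resulting range for $T[i_e+1..n]$, each comparison answered in $O(1)$ by an outward LCE query on $T$. So your first two paragraphs are sound and essentially match the paper, but the lemma remains unproved until the neighbour-search step is filled in, and the correct filling is per-position storage justified by the $2n-1$ bound on maximal palindromes --- not a global predecessor structure.
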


\begin{proof}
  We preprocess $T$ as follows.
  For each $1 \leq i \leq n$,
  let $G_1, \ldots, G_m$ be the list of groups that represent the maximal palindromes
  ending at position $i$ in $T$.
  Let $W_1 = \rev{(T[1..i])}$ and
  $W_r = \rev{(T[1..i-|L_{r-1}|])}$ for $2 \leq r \leq m$.
  Let $\mathcal{A}_i$ be the \emph{sparse suffix array} of size $m = O(\log i)$ such that
  $\mathcal{A}_i[j]$ stores the $j$th lexicographically smallest string in $\{W_1, \ldots, W_m\}$.
  We build $\mathcal{A}_i$ with the LCP array $\mathcal{L}_i$.
  Since there are only $2n-1$ maximal palindromes in $T$,
  $\mathcal{A}_i$ for all positions $1 \leq i \leq n$
  can easily be constructed in a total of $O(n)$ time from the full suffix array of $T$.
  The LCP array $\mathcal{L}_i$ for all $1 \leq i \leq n$ can also be computed
  in $O(n)$ total time from the LCP array of $T$
  enhanced with a range minimum query (RMQ) data structure~\cite{DBLP:conf/latin/BenderF00}.

  To find $W$, we binary search $\mathcal{A}_{i_b-1}$ for $Z[1..\ell] = X$
  in a similar way to pattern matching on the suffix array with the LCP array~\cite{manber93:_suffix}.
  This gives us the range of $\mathcal{A}_{i_b-1}$ such that
  the corresponding strings have the longest common prefix with $X$.
  Since $|\mathcal{A}_{i_b-1}| = O(\log n)$,
  this range can be found in $O(\ell + \log \log n)$ time.
  If the longest prefix found above is shorter than $\ell$,
  then this prefix is $W$.
  Otherwise, we perform another binary search on this range for $Z[\ell+1..|Z|] = T[i_e+1..n]$,
  and this gives us $W$.
  Here each comparison can be done in $O(1)$ time by an outward LCE query on $T$.
  Hence, the longest match for $Z[\ell+1..|Z|]$ in this range
  can also be found in $O(\log \log n)$ time.
  Overall, $W$ can be found in $O(\ell + \log \log n)$ time.
\end{proof}

\begin{lemma}\label{lem:find_G_k}
  We can preprocess $T$ in $O(n)$ time and space so that later,
  given $W$ for a position in $T$,
  we can find $G_k$ for that position in $O(\log \log n)$ time.
\end{lemma}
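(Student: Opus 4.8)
The plan is to turn the characterization of $G_k$ given in Observation~\ref{obs:relationship_between_longestLCE_and_G_k} into a pair of binary searches over the sorted list of groups. Assume that, together with $W$, we also know its index $r^*$ (so that $W = W_{r^*}$) and the value $\lambda = \lcp(W, Z)$, which is produced while locating $W$ in Lemma~\ref{lem:find_W} (or recovered with a single $O(1)$ LCE query). By Observation~\ref{obs:relationship_between_longestLCE_and_G_k}, $G_k$ is the group $G_x$ whose common difference $d_x$ is the largest value that simultaneously satisfies condition~(b), $d_x \leq \lambda$, and condition~(a), that $\rev{(u_x v_x)}$ is a prefix of $W$. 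Since there are only $m = O(\log n)$ groups, and their common differences $d_1 < d_2 < \cdots < d_m$ are already sorted in increasing order, a predecessor search for $\lambda$ can be carried out by plain binary search in $O(\log m) = O(\log\log n)$ time. This already meets the target; the only real work is to enforce condition~(a) without scanning all the groups.

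For preprocessing I would store, for every position $i$, the sorted array of the common differences of the groups ending at $i$, each group annotated with a constant amount of data: its common difference $d_r$, the ending position $q_r = i - |L_{r-1}|$ in $T$ of the reversed-prefix occurrence of $\rev{(u_r v_r)}$ (which is a length-$d_r$ prefix of $W_r = \rev{(T[1..q_r])}$), and a type-1/type-2 flag. With this, condition~(a) for group $G_x$ reduces to testing whether $\LeftLCE_T(q_{r^*}, q_x) \geq d_x$, an $O(1)$-time query after the standard $O(n)$-time LCE preprocessing of the preliminaries. The space is $O(n)$: although a single position may own $\Omega(\log n)$ groups, every maximal palindrome lies in exactly one group of its ending position, so the number of groups summed over all positions is at most the number of maximal palindromes, namely $2n-1 = O(n)$. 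All of these arrays and annotations can be filled while running Manacher's algorithm and bucket-sorting the maximal palindromes, in $O(n)$ total time.

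The hard part will be the structural lemma that makes condition~(a) amenable to binary search: I expect to prove that the groups satisfying condition~(a) with respect to $W = W_{r^*}$ are \emph{downward closed} in the sorted-by-$d$ order, i.e.\ they form a prefix $G_1, \ldots, G_b$. Concretely, one must show that if $\rev{(u_x v_x)}$ is a prefix of $W$ then so is every shorter $\rev{(u_{x'} v_{x'})}$ with $x' < x$, which amounts to arguing that $\rev{(u_{x'} v_{x'})}$ is itself a prefix of $\rev{(u_x v_x)}$. I would derive this from the period-$d_x$ structure of the palindromes inside a group (Lemma~\ref{lem:batched_extension}), the nesting relation that $W_{r+1}$ is a suffix of $W_r$, and the palindromic-border property (Lemma~\ref{lem:pal_border}), in the same spirit as the contradiction argument used for Claim~(1) of Lemma~\ref{lem:group_relationship}. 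This monotonicity is the crux: once established, condition~(a) is monotone in the group index, so the boundary $b$ can be found by a second binary search that probes condition~(a) with one $O(1)$ leftward-LCE query per step, again in $O(\log m) = O(\log\log n)$ time.

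Finally I would combine the two searches. Let $x_0$ be the index returned by the predecessor search for $\lambda$ (condition~(b)), which selects the prefix $\{G_1,\ldots,G_{x_0}\}$, and let $b$ be the prefix boundary for condition~(a). The group with the largest common difference satisfying both conditions is then $G_{\min\{x_0,\,b\}}$, and by Observation~\ref{obs:relationship_between_longestLCE_and_G_k} this is exactly $G_k$. Should the structural lemma even yield the stronger statement that condition~(b) implies condition~(a) for the particular $W$ chosen as the $\argmax$, the second binary search becomes unnecessary and the single predecessor search suffices; in either case the whole query uses $O(\log\log n)$ time and the data structure occupies $O(n)$ space, which proves the lemma.
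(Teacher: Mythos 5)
Your proposal hinges on a ``structural lemma'' that you defer proving---that the groups satisfying Condition~(a) of Observation~\ref{obs:relationship_between_longestLCE_and_G_k} are downward closed in the sorted-by-difference order, i.e.\ form a prefix $G_1,\ldots,G_b$---and this claim is false; it is exactly the obstacle the paper's proof is built to circumvent. The strings $\rev{(u_rv_r)}$ are prefixes of the \emph{different} strings $W_r$, which need not even agree on their first character, so $x'<x$ does not make $\rev{(u_{x'}v_{x'})}$ a prefix of $\rev{(u_xv_x)}$. Concretely, let $T = \mathtt{cbbabbd}\cdots$ and $i_b-1 = 6$, so $Y=\mathtt{cbbabb}$. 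The maximal palindromes ending at position $6$ are $\mathtt{b},\mathtt{bb},\mathtt{bbabb}$ (lengths $1,2,5$), giving two groups: $G_1=\{\mathtt{b},\mathtt{bb}\}$ with $d_1=1$, $u_1v_1=\mathtt{b}$, and $G_2=\{\mathtt{bbabb}\}$ with $d_2=3$, $u_2v_2=\mathtt{bba}$. Then $W_1=\mathtt{bbabbc}$ and $W_2=\rev{(T[1..4])}=\mathtt{abbc}$. For a query whose new block makes $Z=\mathtt{abbc}\cdots$, we get $W=W_2$ and $\lambda=\lcp(W,Z)\geq 4$; Condition~(a) holds for $G_2$ (since $\rev{(u_2v_2)}=\mathtt{abb}$ is a prefix of $W$) but fails for $G_1$ (since $\mathtt{b}$ is not a prefix of $W$), so the satisfying set is $\{G_2\}$, which is not a prefix of the order. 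Your boundary binary search would probe $G_1$, see failure, return $b=0$, and miss $G_k=G_2$ entirely; your fallback hope that (b) implies (a) for the argmax $W$ fails on the same instance ($d_1=1\leq\lambda$ yet (a) fails). The paper's own Figure~\ref{fig:array_tree} exhibits the same phenomenon at scale: $\mathcal{R}_i[7]=\{0,4,37\}$ skips the groups with differences $1$ and $11$.

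What is actually true---and what the paper exploits---is that the sets $\mathcal{R}_i[j]$ of groups satisfying Condition~(a) with respect to each $W_j$ form a laminar (nested-interval) family over the lexicographically sorted array $\mathcal{A}_i$: for a fixed $W$ the satisfying groups do form a chain with increasing $d$-values, but which chain you get depends on $W$, and distinct chains interleave arbitrarily in the global $d$-order. The paper therefore builds, per position, a tree with $m$ leaves (entries of $\mathcal{A}_i$) and $m$ internal nodes (groups), in which the groups satisfying (a) w.r.t.\ $W_{\mathcal{A}_i[j]}$ are exactly the ancestors of leaf $j$, with $d$-values increasing along the root-to-leaf path; Condition~(b) then becomes a binary search along that path, implemented with a level ancestor data structure in $O(\log\log n)$ time, and all trees together take $O(n)$ space since the number of groups is bounded by the number of maximal palindromes. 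Your $O(n)$ accounting of the groups, the $O(1)$ $\LeftLCE$ test for a single instance of Condition~(a), and the predecessor search for Condition~(b) are all fine as ingredients, but without the laminar/tree structure there is no contiguous, monotone set to binary search over, so the proposal as stated does not yield a correct $O(\log\log n)$-time query.
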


\begin{proof}
Let $\mathcal{D}_i$ be an array of size $|\mathcal{A}_i|$
such that $\mathcal{D}_i[j]$ stores the value of $d_r = |u_rv_r|$,
where $W_r$ is the lexicographically $j$th smallest string in $\{W_1, \ldots, W_m\}$.
Let $\mathcal{R}_i$ be an array of size $|\mathcal{A}_i|$
where $\mathcal{R}_i[j]$ stores a sorted list of common differences $d_r = |u_rv_r|$
of groups $G_r$, such that $G_r$ stores maximal palindromes ending at position $i$
and $\rev{(u_r v_r)}$ is a prefix of the string $\mathcal{A}_i[j]$.
Clearly, for any $j$, $\mathcal{D}_i[j] \subseteq \mathcal{R}_i$.
See also Figure~\ref{fig:array_tree} for an example of $\mathcal{R}_i$.

Suppose that we have found $W$ by Lemma~\ref{lem:find_W},
and let $j$ be the entry of $\mathcal{A}_{i_b-1}$
where the binary search for $X$ terminated.
We then find the largest $d_x$ that satisfies Condition (b)
of Observation~\ref{obs:relationship_between_longestLCE_and_G_k},
by binary search on the sorted list of common differences
stored at $\mathcal{R}_{i_b-1}[j]$.
This takes $O(\log \log n)$ time since the list stored at
each entry of $\mathcal{R}_{i_b-1}$ contains at most $|A_{i_b-1}| = O(\log n)$ elements.

We remark however that the total number of elements in
$\mathcal{R}_i$ is $O(\log^2 i)$ since
each entry $\mathcal{R}_i[j]$ can contain $O(\log i)$ elements.
Thus, computing and storing $\mathcal{R}_i$ explicitly
for all text positions $1 \leq i \leq n$ can take superlinear time and space.

Instead of explicitly storing $\mathcal{R}_i$,
we use a tree representation of $\mathcal{R}_i$, defined as follows:
The tree consists of exactly $m = |\mathcal{A}_i|$ leaves and
exactly $m$ non-leaf nodes.
Each leaf corresponds to a distinct entry $j = 1, \ldots, m$,
and each non-leaf node corresponds to a value from $\mathcal{D}_i$.
Each leaf $j$ is contained in a (sub)tree rooted at a node with $d \in \mathcal{D}_i$,
iff there is a maximal interval $[j'..j'']$
such that
$\mathcal{L}_i[j+1] \geq \mathcal{D}_i[j]$ for any $j \in [j'..j'']$.
We associate each node with this maximal interval.
Since we have
defined $d_1 = 0$,
the root stores $0$ and it has at most $\sigma$ children.
See Figure~\ref{fig:array_tree} that illustrates
  a concrete example for $T[1..i_b-1] = dddF_7^4F_6^2F_5^2F_4F_3^3F_2^2F_1^3$
  with $i_b = 3451$, where
\begin{eqnarray}
  F_1 & = &a \nonumber \\
  F_2 & = &\rev{F_1^3}b \nonumber \\
  F_3 & = &\rev{F_1^3}\rev{F_2^2} \nonumber \\
  F_4 & = &\rev{F_1^3}\rev{F_2^2}\rev{F_3^2}F_2 \nonumber \\
  F_5 & = &\rev{F_1^3}\rev{F_2^2}\rev{F_3^3}\rev{F_4}c \nonumber \\
  F_6 & = &\rev{F_1^3}\rev{F_2^2}\rev{F_3^3}\rev{F_4}\rev{F_5}cF_4F_3^2 \nonumber \\
  F_7 & = &\rev{F_1^3}\rev{F_2^2}\rev{F_3^3}\rev{F_4}\rev{F_5^2}\rev{F_6}\rev{F_3^2}\rev{F_4}cF_5F_4F_3^3F_2^2F_1^2. \nonumber
\end{eqnarray}
We remark that $F_7^4F_6^2F_5^2F_4F_3^3F_2^2F_1^3$, $F_7^3F_6^2F_5^2F_4F_3^3F_2^2F_1^3$, \ldots,
$F_1$ are suffix palindromes of $T[1..i_b-1]$.

We can easily construct this tree in time linear in its size $m = |\mathcal{A}_i|$,
in a bottom up manner.
First, we create leaves for all entries $j = 1, \ldots, m$.
Next, we build the tree in a bottom-up manner,
by performing the following operations in decreasing order of $\mathcal{D}_i[j]$.
\begin{enumerate}
\item[(1)]
  Create a new node with $\mathcal{D}_i[j]$,
  and connect this node with the highest ancestor of leaf $j$.
\item[(2)]
  We check $j' < j$ in decreasing order,
  and connect the new node with
  the highest ancestor of leaf $j'$ iff $\mathcal{L}_i[j'+1] \geq \mathcal{D}_i[j]$.
  We skip the interval corresponding to this ancestor,
  and perform the same procedure until we find $j'$ that does not meet the above condition.
  We do the same for $j'' > j$.
\end{enumerate}
Since each node is associated with its corresponding interval in the LCP array,
it suffices for us to check the conditions $\mathcal{L}_i[j'+1] \geq \mathcal{D}_i[j]$
and $\mathcal{L}_i[j''] \geq \mathcal{D}_i[j]$
only at either end of the intervals that we encounter.
Clearly, in the path from the root to leaf $j$,
the values in $\mathcal{R}_j[j]$ appear in increasing order.
Thus, we can find the largest $d_x$ that satisfies Condition (b)
of Observation~\ref{obs:relationship_between_longestLCE_and_G_k},
by a binary search on the corresponding path in the tree.
We augment the tree with
a level ancestor data structure~\cite{berkman94:_findin,bender04:_level_ances_probl},
so that each binary search takes logarithmic time in the tree height,
namely $O(\log \log n)$ time.
The size of the tree for position $i$
is clearly bounded by the number of maximal palindromes
ending at position $i$.
Thus, the total size and construction time for the trees for all positions in $T$ is $O(n)$.
\end{proof}

\begin{figure}[tb]
  \def\@captype{table}
  \begin{minipage}[c]{0.15\hsize}
    \begin{center}
      \begin{tabular}{|c|l|r|r|l|} \hline
        $j$ & $W_{\mathcal{A}_i[1]}, \ldots , W_{\mathcal{A}_i[m]}$ & $\mathcal{D}_i$ & $\mathcal{L}_i$ & $\mathcal{R}_i$ \\ \hline \hline
        1 & $\mathtt{abaaabaaaaaabaaaba}\cdots$ & 689 & - & 0,1,689 \\
        2 & $\mathtt{aabaaabaaaaaabaaab}\cdots$ & 1 & 1 & 0,1 \\
        3 & $\mathtt{aaabaaabaaaaaabaaa}\cdots$ & 223 & 2 & 0,1,11,223 \\
        4 & $\mathtt{aaabaaabaaaaaabaaa}\cdots$ & 0 & 22 & 0,1,11 \\
        5 & $\mathtt{aaabaaabaaaaaabaaa}\cdots$ & 11 & 33 & 0,1,11\\
        6 & $\mathtt{baaabaaaaaabaaabaa}\cdots$ & 4 & 0 & 0,4 \\
        7 & $\mathtt{baaaaaabaaabaaaaaa}\cdots$ & 37 & 4 & 0,4,37 \\
        8 & $\mathtt{caaabaaabaaaaaabaa}\cdots$ & 82 & 0 & 0,82 \\  \hline
      \end{tabular}
    \end{center}
  \end{minipage}
  \hspace{240pt}
  \begin{minipage}[c]{0.15\hsize}
    \centerline{
      \includegraphics[scale=0.35]{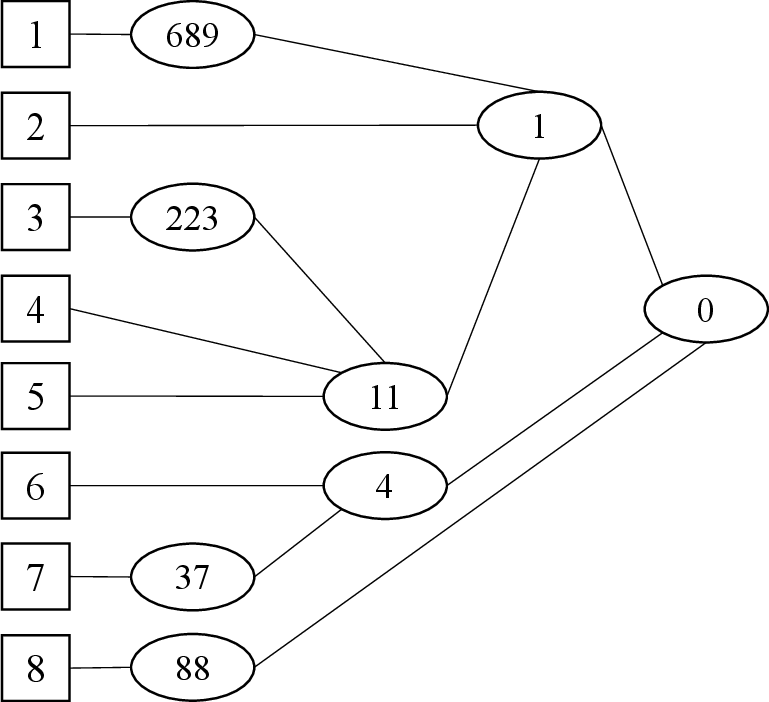}
    }
  \end{minipage}
  \caption{Examples for $\mathcal{R}_i$ (left) and
    its corresponding tree (right).
    The remaining parts of the strings $W_{\mathcal{A}_i[1]}, \ldots , W_{\mathcal{A}_i[m]}$ are omitted due to lack of space.}
  \label{fig:array_tree}
\end{figure}

By Lemmas~\ref{lem:efficient_batched_extension}, \ref{lem:batched_extension},~\ref{lem:group_relationship},~\ref{lem:find_W}, and~\ref{lem:find_G_k}, we can compute in $O(\ell + \log \log n)$ time the length of the LPS of $T''$
that are extended after the block edit.

\begin{remark}
  An alternative method to Lemma~\ref{lem:efficient_batched_extension}
  would be to first build the
  suffix tree of $T\#\rev{T}\$$ enhanced with
  a dynamic lowest common ancestor data structure~\cite{ColeH05}
  using $O(n)$ time and space~\cite{Farach-ColtonFM00},
  and then to update the suffix tree with string $T\#\rev{T}\$X \#' \rev{X}\$'$
  using Ukkonen's online algorithm~\cite{Ukk95},
  where $\#'$ and $\$'$ are special characters not appearing in $T$ nor $X$.
  This way, one can answer LCE queries
  between any position of the original string $T$
  and any position of the new block $X$ in $O(1)$ time.
  Since we need $O(f)$ LCE queries,
  it takes $O(f)$ total time for all LCE queries.
  However, Ukkonen's algorithm requires $O(\ell \log \sigma)$ time
  to insert $X \#' \rev{X} \$'$ into the existing suffix tree,
  where $\ell = |X|$.
  Thus, this method requires us $O(\ell \log \sigma + f)$ time
  and thus is slower by a factor of $\log \sigma$
  than the method of Lemma~\ref{lem:efficient_batched_extension}.
\end{remark}

\subsection{Shortened Maximal Palindromes}
Next, we consider the maximal palindromes that get
shortened
after a block edit.
\begin{observation}[Shortened maximal palindromes after a block edit]
  \label{obs:shortened_maximal_pals_block}
  A maximal palindrome $T[b..e]$ of $T$ gets
  shortened in $T''$ iff $b \leq i_b \leq e$ or $b \leq i_e \leq e$.
\end{observation}
The difference between Observation~\ref{obs:shortened_maximal_pals}
and this one is only in that here we need to consider two positions
$i_b$ and $i_e$.
Hence, we obtain the next lemma
using a similar method to Lemma~\ref{lem:shortened_maximal_pals}:

\begin{lemma}
  We can preprocess a string $T$ of length $n$ in $O(n)$ time and space
  so that later we can compute in $O(1)$ time
  the length of the longest maximal palindromes of $T''$
  that are shortened
  after a block edit.
\end{lemma}

\subsection{Maximal Palindromes whose Centers Exist in the New Block}
Finally, we consider those maximal palindromes
whose centers exist in the new block $X$ of length $\ell$.
By symmetric arguments to Observation~\ref{obs:extended_maximal_pals_block},
we only need to consider
the prefix palindromes and suffix palindromes of $X$.
Using a similar technique to Lemma~\ref{lem:efficient_batched_extension},
we obtain:

\begin{lemma}\label{lem:prefix_pals_in_block}
  We can compute the length of the longest
  maximal palindromes whose centers are inside $X$
  in $O(\ell)$ time and space.
\end{lemma}

\begin{proof}
First, we compute all maximal palindromes in $X$ in $O(\ell)$ time.
Let $p_1, \ldots, p_u$ be a sequence of the lengths of the prefix palindromes
of $X$ sorted in increasing order.
For each $1 \leq j \leq u$,
let $\alpha_j = \lcp(X[p_j+1..\ell], \rev{(T[1..i_b-1])})$,
namely, $p_j + 2\alpha_j$ is the length of the extended maximal palindrome
for each $p_j$.
Suppose we have computed $\alpha_{j-1}$,
and we are to compute $\alpha_j$.
See also Figure~\ref{fig:prefix_pals_in_block}.
\begin{figure}[tb]
  \centerline{
    \includegraphics[scale=0.6]{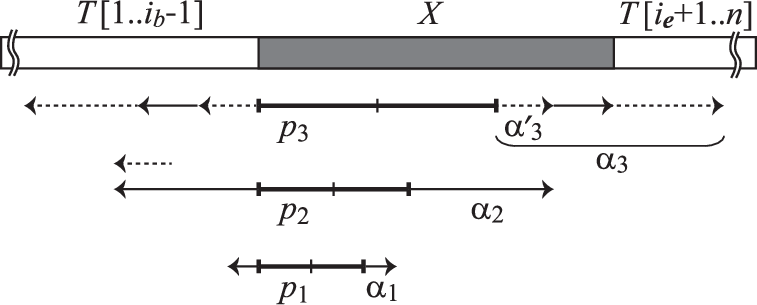}
  }
  \caption{Illustration for Lemma~\ref{lem:prefix_pals_in_block},
    where solid arrows represent the matches obtained by
    na\"ive character comparisons,
    and broken arrows represent those obtained by LCE queries.
    Here are three prefix palindromes of $X$
    of length $p_1$, $p_2$, and $p_3$.
    We compute $\alpha_1$ na\"ively.
    Here, since $p_1+\alpha_1 < p_2$, we compute $p_2$ na\"ively.
    Since $p_2+\alpha_2 > p_3$, we compute
    $\LeftLCE_T(i_b-1, i_b-\alpha_2+\alpha'_3-1)$.
    Here, since its value reached $\alpha'_3$, we perform
    na\"ive character comparison for $X[p_3+\alpha'_3+1..\ell]$
    and $\rev{(T[1..i_b-\alpha'_3-1])}$.
    Here, since there was no mismatch, we perform
    $\OutLCE_T(i_b-\ell+p_3-1, i_e+1)$ and finally obtain $\alpha_3$.
    Other cases can be treated similarly.
  }
  \label{fig:prefix_pals_in_block}
\end{figure}
If $p_{j-1} + \alpha_{j-1} \leq p_{j}$,
then we compute $p_j$ by na\"ive character comparisons.
Otherwise, then let $\alpha'_j = p_{j-1} + \alpha_{j-1} - p_{j}$.
Then, we can compute $\lcp(X[p_j+1..p_j+\alpha'_j], \rev{(T[1..i_b-1])})$
by a leftward LCE query in the original string $T$.
If this value is less than $\alpha'_j$, then it equals to $\alpha_j$.
Otherwise, then we compute $\lcp(X[p_j+\alpha'_j+1..\ell], \rev{(T[1..i_b-1])})$
by na\"ive character comparisons.
The total number of matching character comparisons
is at most $\ell$ since each position in $X$ can be involved
in at most one matching character comparison.
The total number of mismatching character comparisons
is also $\ell$, since there are at most $\ell$ prefix palindromes of $X$
and for each of them there is at most one mismatching character comparison.
Hence, it takes $O(\ell)$ time to compute
the length of the longest maximal palindromes whose centers
are inside $X$.
\end{proof}

\section{Conclusions and Future Work}

In this paper, we dealt with the problems of computing the LPS of a string after a single-character edit operation or a block-wise edit operation.
We proposed an $O(\log (\min \{\sigma, \log n\}))$-time query algorithm that answers the LPS after a single-character edit operation, with $O(n)$-time and space preprocessing,
where $\sigma$ is the number of distinct characters appearing in the string.
Furthermore,
we presented an $O(\ell + \log \log n)$-time query algorithm that answers the LPS after a block-wise edit operation, with $O(n)$-time and space preprocessing,
where $\ell$ denotes the length of the block
after an edit.

Our future work includes the following:
\begin{itemize}
  \item[(1)] Can we efficiently compute the \emph{longest gapped palindrome} in a string after an edit operation? We suspect that it might be possible with a fixed gap length, perhaps using combinatorial properties of gapped palindromes with a fixed gap length from~\cite{IMSIBTNS15}.
  \item[(2)] Can we extend our algorithm to biological palindromes with reverse complements such as those in DNA/RNA sequences?
  The key will be whether or not periodic properties hold for such palindromes.
  \item[(3)] Amir et al.~\cite{AmirBCK19} proposed a fully-dynamic algorithm that can maintain a data structure of $\tilde{O}(n)$ space to report a longest square substring after a single character substitution in $n^{o(1)}$ time. The preprocessing cost for their data structure is $\tilde{O}(n)$ time. It is interesting if one can achieve a faster and/or more space-efficient algorithm for finding a longest square substring, if the edit operation is restricted to a \emph{query} as in this paper.

\end{itemize}

\section*{Acknowledgments}
This work was supported by JSPS KAKENHI Grant Numbers JP20J21147 (MF), JP18K18002 (YN), JP17H01697 (SI), JP16H02783 (HB), JP20H04141 (HB),\\JP18H04098 (MT), and by JST PRESTO Grant Number JPMJPR1922 (SI).

The authors thank anonymous referees for helpful comments, in particular,
for suggesting simpler solutions for Sections~\ref{subsec:longest_extended_palindromes_from_each_group} and \ref{subsec:relationship_of_groups} which are described in Appendix.

\bibliography{ref}

\clearpage
\section{Appendix}\label{sec:appendix}

Section~\ref{subsec:longest_extended_palindromes_from_each_group} and \ref{subsec:relationship_of_groups} can be simplified.

\subsection{Simpler Solution of Section \ref{subsec:longest_extended_palindromes_from_each_group}}

In this subsection, we show that it suffices to consider only one maximal palindrome from each group without using Lemma~\ref{lem:batched_extension}.
The key ideas behind were originally introduced in~\cite{DBLP:conf/mfcs/RubinchikS20}.
First, we compute $e_l$ and $e_r$ for a group $\langle s, d, t \rangle$.
$[e_l..e_r]$ is the maximal interval which has a period $d$.
The maximal palindrome whose center is the closest to $(e_l+e_r)/2$ will be the longest one of the group after edit.
Since the center of the palindrome of length $s(j) \in \langle s, d, t \rangle$ is $i-\frac{s+(j-1)d}{2}$,
the palindrome of length $s(j)$, whose center is the closest to $(e_l+e_r)/2$ among any palindromes in $\langle s, d, t \rangle$, can be found in constant time.
If the center is not equal to $(e_l+e_r)/2$,
then the radius of the longest extended palindrome is the minimum distance from the center to $e_l$ and $e_r$.
Otherwise, the palindrome can be extended beyond the periodicity.
Then the extension can be computed by using an outward LCE query.
Hence, by using Lemma~\ref{lem:efficient_batched_extension},
one can compute the longest maximal palindrome that gets extended
in $O(\ell + \log n)$ time.

\subsection{Simpler Solution of Section \ref{subsec:relationship_of_groups}}

In this subsection, we show proof of the following lemma which is a simplification of Lemma~\ref{lem:group_relationship}.

\begin{lemma}\label{lem:new_group_relationship}
  There is a longest palindromic substring in the edited string $T''$
  that is obtained by extending the maximal palindromes in $G'_m$ or $G'_k$, where $G'_m = G_m \cup \{u_mv_mu_m, u_m\}$.
\end{lemma}

\begin{proof}
  The lemma holds if the two following claims are true:
  \begin{description}
  \item[Claim (1):]
      The extensions of the maximal palindromes in $G_1, \ldots, G_{k-1}$,
      except for $u_kv_ku_k$ and $u_k$,
      cannot be longer than the extension of the shortest maximal palindrome in $G_k$.

    \item[Claim (2)] Suppose $G_m$ is of type-1. Then, the extensions of the maximal palindromes from $G_{k+1}, \ldots, G_{m-1}$, which are also of type-1, cannot be longer than the extensions of the maximal palindromes from $G_m$, except for $u_mv_mu_m$ and $u_m$.
  \end{description}

  \noindent \textbf{Proof for Claim (1).}
  We use the following known result:

  \begin{lemma}[\cite{DBLP:conf/sofsem/KosolobovRS15}] \label{lem:pal^k_lem3}
    Suppose $(uv)^pu$ is a palindrome, where $u$ and $v$ are palindromes and $p$ is a non-negative integer.
    Also suppose $q$ is a palindromic substring of $(uv)^pu$ such that $|q| \geq |uv|-1$.
    Then the center of $q$ coincides with the center of some $u$ or $v$.
  \end{lemma}

  For any maximal palindrome $Q$ from $G_1, \ldots, G_{k-1}$,
  except for $u_kv_ku_k$ and $u_k$,
  assume on the contrary that $Q$ gets extended by at least $d_k$
  to the left and to the right.
  Since $G_k$ is of type-2,
  the interval $[i_b-|Q|-d_k..i_b+d_k]$ is contained in the substring $(u_kv_k)^pu_k$.
  Since $|Q|+2d_k \geq |u_kv_k|-1$,
  the center of $Q$ coincides with the center of some $u_k$ or $v_k$ from Lemma~\ref{lem:pal^k_lem3}.
  However this contradicts that $Q$ does not belongs to $G'_k$.
  Thus $Q$ cannot be extended by $d_k$ nor more to the left and to the right.
  Since $G_k$ is of type-2, $\alpha_k \geq d_k$.
  Since $|Q| < |(u_k v_k)^2u_k|$,
  the extension of $Q$ cannot be longer than the extension for $(u_k v_k)^2u_k$.
  This completes the proof for Claim (1).

  \noindent \textbf{Proof for Claim (2).}
  Consider each group $G_r = \langle s_r, d_r, t_r \rangle$ with $k+1 \leq r \leq m-1$.
  Now we show that the maximal palindromes from $G_r \setminus \{u_{r+1}v_{r+1}u_{r+1}, u_{r+1}\}$ cannot be longer than the original length of the maximal palindrome from $G_{r+1}$ before the extension.
  First, since the longest maximal palindromes from $G_r$ is $u_{r+1}v_{r+1}u_{r+1}$ or $u_{r+1}$, the longest one from $G_r \setminus \{u_{r+1}v_{r+1}u_{r+1}, u_{r+1}\}$ is at least $d_r+d_{r+1}$ symbols shorter than an element of $G_{r+1}$.
  The extension of this palindrome is less than $2d_r$ because $G_r$ is of type-1.
  Therefore, the extended maximal palindromes from $G_r \setminus \{u_{r+1}v_{r+1}u_{r+1}, u_{r+1}\}$ cannot be longer than the element of the previous group.
  This completes the proof for Claim (2).
\end{proof}

\end{document}